\author{Federico~Poloni \& Giacomo~Sbrana}
\title{Multivariate trend-cycle extraction with the Hodrick-Prescott filter}
\newtheorem{Remark}{Remark}
\newtheorem{Lemma}{Lemma}
\newtheorem{Proposition}{Proposition}
\newtheorem{Theorem}{Theorem}
\DeclareMathOperator{\diag}{diag}
\DeclareMathOperator{\var}{Var}
\newcommand{\E}[1]{\mathbb{E}\left[ #1 \right]}
\DeclarePairedDelimiter{\abs}{\lvert}{\rvert}
\DeclarePairedDelimiter{\norm}{\lVert}{\rVert}
\newcommand{\m}[1]{\begin{bmatrix}#1\end{bmatrix}}
\begin{document}
\baselineskip=24pt
\date{}
 \maketitle
\begin{abstract}
The Hodrick-Prescott filter represents one of the most popular method for trend-cycle extraction in macroeconomic time series. In this paper we provide a multivariate generalization of the Hodrick-Prescott filter, based on the seemingly unrelated time series approach. We first derive closed-form expressions linking the signal-noise matrix ratio to the parameters of the VARMA representation of the model. We then show that the parameters can be estimated using a recently introduced method, called ``Moment Estimation Through Aggregation (META)". This method replaces the traditional multivariate likelihood estimation with a procedure that requires estimating univariate processes only. This makes the estimation simpler, faster and better-behaved numerically. We prove that our estimation method is consistent and asymptotically normal distributed for the proposed framework. Finally, we present an empirical application focusing on the industrial production of several European countries.  
%This paper provides closed-form results for the smooth-trend model, also known as Hodrick-Prescott filter, in the multivariate case. This technique is frequently employed for filtering macroeconomic time series.We first derive closed-form expressions linking the signal-noise matrix ratio to the parameters of the VARMA representation of the model. In addition we provide a fast estimation method that is consistent and asymptotically normal. We present an empirical application focusing on the trend extraction from the industrial production of some European countries.  

%\paragraph{JEL Classification:} C32
\end{abstract}

\section{Introduction}
The extraction of trend and cycle components from economic time series represents an important tool for economic analysis. Several univariate methods have been discussed by the literature (see for example Beveridge \& Nelson (1981), Watson(1986), Harvey \& Jaeger (1993), Canova(1998), Baxter \& King(1999), Harvey \& Trimbur(2003)). One of the most popular method, widely employed in macroeconomics, is the ``smooth-trend model", generally known as Hodrick-Prescott(1997) filter. As a matter of fact, this method was suggested a long time before by Leser(1961) for trend extraction (see the discussion in Mills(2009)). This approach extracts a stochastic trend which moves smoothly over time and is uncorrelated with the random irregular term representing the cyclical component. The ratio between the variances of the two noises (i.e. the signal-to-noise ratio) is the key scalar that determines the ``smoothness" of the extracted trend. For example, Hodrick \& Prescott(1997) suggest specific values for time series observed at different frequencies.

Despite the recognized importance of this method in empirical analysis, we still know little about the multivariate case. This represents a relevant framework since it allows extracting multiple trends that might share similar dynamic behaviors (such as the case of common trends as in Stock \& Watson (1988)). A notable exception is Kozicki(1999) who discusses the multivariate extension of the Hodrick-Prescott filter. However, the author assumes the same single common trend for the whole system of equations. This assumption is rather restrictive and is relaxed here.

We provide analytical results for the Hodrick-Prescott filter in the multivariate case. More specifically, we derive closed-form results linking the signal-noise matrix to the parameters of the VARMA model representing the stationary representation of the structural model. Similar expressions for some trend-cycle models can be found in Morley et al.(2003) and Oh et al. (2008); however, none of the mentioned papers deal with the multivariate case. Establishing explicit relations between the two forms is relevant since one can derive/estimate the structural parameters from the reduced form ones (and viceversa).

Relying on these relations, we build a fast and simple estimation method for the covariance matrices and extract from it a change of variable matrix that decouples the model into $d$ uncorrelated ones with the same form. These models can then be estimated, each with its own optimal signal-noise ratio. Our method generalizes the so-called META approach of Poloni \& Sbrana(2014), which was initially developed for the multivariate exponential weighted moving average model. We prove in the appendix that the resulting estimator is consistent and asymptotically normally distributed.

Finaly, we show an example for the practical use of our results in extracting the trends from the industrial production series of some European countries.

\section{Theoretical results}
The unobserved components representation of the smooth-trend model, also known as Hodrick-Prescott filter, was firstly used by Akaike(1980). Here we consider the multivariate state-space representation (for the univariate case see Harvery \& Trimbur(2008))
\begin{equation}
\begin{aligned}
y_{t} &=\mu_{t}+\epsilon_{t},\\
\mu_{t+1} &=\mu_{t}+\beta_{t},\\ 
\beta_{t+1} &=\beta_{t}+\xi_{t},
\end{aligned}
\label{eq:MT}
\end{equation}
where the vectors $y_t, \mu_t, \beta_t, \epsilon_t, \xi_t$ are of dimension $d$. In addition $t=1,2,\cdots,N$ represents the number of observations. Contrary to Harvey \& Trimbur(2008), here we relax the assumption of normality of the noises. We assume that the noises $\epsilon_t$ and $\xi_t$ are i.i.d. with zero mean and covariances
\begin{equation}
\var\m{
\epsilon_{t}\\
\xi_{t}\\
}= 
\m{
\Sigma_{\epsilon} &  0 \\
0 & \Sigma_{\xi}\\
}\label{eq:Matrix1}
\end{equation}
where $\Sigma_{\epsilon}$, $\Sigma_{\xi}$ and $0$ are $d\times d$ matrices. In the scalar case the ratio between the variances of the two noises $\Sigma_{\xi}\Sigma_{\epsilon}^{-1}$ (i.e. the \emph{signal-noise ratio}) is the key scalar that determines the ``smoothness" of the extracted trend $\hat{\mu}$. For example, the smaller the ratio, the smoother the extracted trend (see Kaiser \& Maravall (2001)). In the multivariate case the ratio is a non-symmetric matrix (in general) whose eigenvalues play the role of the scalars that determine the smoothness of the trends of the transformed series (this will be clarified below).

The nonstationary state-space system (\ref{eq:MT}) is called a \emph{structural process} and the covariances in (\ref{eq:Matrix1}) are called \emph{structural parameters}. Its stationary representation $z_t=y_t-2y_{t-1}+y_{t-2}$ is a (second order) integrated vector moving average of order two (See Harvey(1989) and Maravall \& Del-Rio(2007) for the univariate case). Using Wold representation theorem, we can reparametrize it as
\begin{equation}  
\begin{array}{l}
z_t=\Delta^{2}y_{t}=(I-2L+L^2)y_{t}=\xi_{t-2}+(I- 2L+L^2)\epsilon_{t}= \\ \\
=(I+\Theta_1 L+\Theta_2 L^2)\eta_{t},\quad\quad\text{with $E(\eta_{t}\eta_{t}')=\Omega$},
\end{array}
\label{eq:VMA}
\end{equation}
where $L$ is the backshift operator and $I$ is the $d\times d$ identity matrix. This form is known as \emph{reduced form}, with parameters $\Theta_1$, $\Theta_2$ and $\Omega$.

The autocovariances of $z_t$ can be expressed as functions of both structural and reduced form parameters as follows; here and in the following, we use the symbol $M'$ to denote the (conjugate) transpose of a matrix $M$.
\begin{subequations} \label{eq:covs}
\begin{align}
\Gamma_{0}&=E\left(z_t z_{t}'\right)=6\Sigma_{\epsilon}+\Sigma_{\xi}=\Omega+\Theta_1\Omega\Theta_1'+\Theta_2\Omega\Theta_2'\\
\Gamma_{1}&=E\left(z_t z_{t-1}'\right)= -4\Sigma_{\epsilon}=\Theta_1\Omega+\Theta_2\Omega\Theta_1',  \\
\Gamma_{2}&=E\left(z_t z_{t-2}'\right)=\Sigma_{\epsilon}=\Theta_2\Omega,\\
\Gamma_{j}&=E\left(z_t z_{t-j}'\right)=0, \quad \text{ for $|j|\geq 3$}.
\end{align}
\end{subequations}
Note that the autocovariance matrices are all symmetric i.e. $\Gamma_{-2}=\Gamma_2'=\Gamma_2$ and $\Gamma_{-1}=\Gamma_1'=\Gamma_1$; this is a characteristic feature of this model that we shall exploit in our computations.

Computing $\Theta_1$, $\Theta_2$ and $\Omega$ from the covariances $\Gamma_k$ in~\eqref{eq:covs} (or, equivalently, from $\Sigma_\epsilon$ and $\Sigma_\xi$) requires solving a system of nonlinear equations. We wish to show how the solution can be determined explicitly in closed form. 

One can gather together the autocovariances to form the \emph{autocovariance generating function} (ACGF) (see Harvey(1989)), which is a rational function in the formal variable $L$
\begin{multline*}
 \Gamma(L):=\Gamma_2 L^2 + \Gamma_1 L + \Gamma_0 + \Gamma_{-1}L^{-1} + \Gamma_{-2} L^{-2} \\= \Sigma_\epsilon L^2 -4 \Sigma_\epsilon L + 6\Sigma_\epsilon + \Sigma_\xi -4 \Sigma_\epsilon L^{-1} + \Sigma_\epsilon L^{-2} = \frac{(L-1)^4}{L^2}\Sigma_\epsilon + \Sigma_\xi.
\end{multline*}

The relation among the autocovariances and the MA(2) parameters can be embedded in the following factorization 
\begin{equation} \label{canfact}
 \Gamma(L)  = (\Theta_2 L^2 + \Theta_1 L + I)\Omega (I + \Theta_1' L^{-1} + \Theta_2' L^{-2}).
\end{equation}
If the matrix polynomial $\Theta_2 L^2 + \Theta_1 L + I$ has no roots inside the unit circle, the MA process is called \emph{invertible}. If the ACGF is such that $\Gamma(z)$ is invertible for each $z$ on the unit circle $\{z\in\mathbb{C} \colon \abs{z}=1\}$, such factorization (called \emph{canonical factorization}) exists and is unique; for a formal proof of this statement in the matrix case, see Gohberg, Lancaster and Rodman(1982), Theorem~4.1 (and the following remark).  Therefore, if we determine a canonical factorization \eqref{canfact} of the ACGF, then its coefficients $\Theta_1$, $\Theta_2$ and $\Omega$ must coincide with the MA(2) parameters.

Our strategy is constructing explicitly such a factorization of the ACGF. We start from the scalar case, then move on to the multivariate one.

\subsection{The univariate case}
Here we provide the algebraic linkage between the scalar signal-noise ratio and the moving average parameters. This linkage is well-known and is discussed for example in McElroy(2008), for the HP-filter, as well as in Sbrana(2011) for the generic local linear trend. The result in Proposition \ref{scalarcase} is instrumental for the multivariate case. 
\begin{Proposition} \label{scalarcase}
Let $z_t$ follow an univariate version of the model \eqref{eq:VMA}, and suppose $\Sigma_\epsilon>0$, $\Sigma_\xi>0$. Let $\delta=\Sigma_\xi \Sigma_\epsilon^{-1}$. Then, $z_t$ follows an invertible MA(2) $z_t = (1+\theta_1 L +\theta_2 L^2)u_t$ process with coefficients
\[
 \theta_1 =  -2 + \frac12 \sqrt{-2\delta + 2\sqrt{\delta^2+16\delta}}, \quad \theta_2 = \frac{-\theta_1}{4+\theta_1} = \frac{4 - \sqrt{-2\delta + 2\sqrt{\delta^2+16\delta}}}{4+\sqrt{-2\delta + 2\sqrt{\delta^2+16\delta}}}
\]
and $\omega =\var u_t = \theta_2^{-1} \sigma_\epsilon$.
\end{Proposition}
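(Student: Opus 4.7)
The plan is to impose the canonical factorization~\eqref{canfact} in the scalar case and solve the resulting polynomial system explicitly. The Gohberg--Lancaster--Rodman theorem quoted earlier guarantees existence and uniqueness as soon as $\Gamma(z)$ does not vanish on the unit circle: setting $L=e^{-i\lambda}$ gives $\Gamma(e^{-i\lambda})=16\sigma_\epsilon\sin^4(\lambda/2)+\sigma_\xi$, which is strictly positive whenever $\sigma_\epsilon,\sigma_\xi>0$. So the invertible triple $(\theta_1,\theta_2,\omega)$ is uniquely determined, and it only remains to compute it.

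Specializing~\eqref{eq:covs} to the scalar case one reads off $\omega\theta_2=\sigma_\epsilon$, $\omega\theta_1(1+\theta_2)=-4\sigma_\epsilon$, and $\omega(1+\theta_1^2+\theta_2^2)=6\sigma_\epsilon+\sigma_\xi$. The first yields the formula $\omega=\sigma_\epsilon/\theta_2$ stated at the end of the proposition; dividing the second by the first eliminates $\omega$ and produces $\theta_1(1+\theta_2)=-4\theta_2$, equivalent to the stated identity $\theta_2=-\theta_1/(4+\theta_1)$.

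Substituting this expression for $\theta_2$ into the third equation and clearing the $(4+\theta_1)^2$ denominator produces a quartic in $\theta_1$,
\[
\theta_1^4+8\theta_1^3+(24+\delta)\theta_1^2+(32+4\delta)\theta_1+16=0.
\]
I would then apply the depressed change of variable $\theta_1=s-2$, which kills both the $s^3$ and the $s$ coefficients and leaves the biquadratic $s^4+\delta s^2-4\delta=0$. The quadratic formula in $s^2$ gives the two branches $s^2=(-\delta\pm\sqrt{\delta^2+16\delta})/2$; only the $+$ branch is non-negative for $\delta>0$, so $s=\pm\tfrac12\sqrt{-2\delta+2\sqrt{\delta^2+16\delta}}$.

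The main (and only truly delicate) step is choosing the sign of $s$ that matches invertibility, since only one of the two real roots above corresponds to the canonical factorization. For $\delta>0$ one first checks that $0<|s|<2$, and I would take the $+$ sign: then $\theta_1\in(-2,0)$ and $\theta_2=(2-s)/(2+s)\in(0,1)$, while a one-line computation gives $\theta_1^2-4\theta_2=-s^2(2-s)/(2+s)<0$, so the roots of $1+\theta_1 L+\theta_2 L^2$ form a complex-conjugate pair of common modulus $\theta_2^{-1/2}>1$, lying strictly outside the unit circle. The $-$ sign instead yields $\theta_2>1$ and the reciprocal non-invertible roots, so uniqueness of the canonical factorization forces the $+$ sign and delivers the proposition's formulas.
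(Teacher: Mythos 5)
Your proof is correct. Up to the quartic $\theta_1^4+8\theta_1^3+(24+\delta)\theta_1^2+(32+4\delta)\theta_1+16=0$ you follow exactly the paper's route: the canonical factorization, the moment equations \eqref{relations}, and the elimination of $\omega$ and $\theta_2$. Where you genuinely diverge is in solving the quartic and selecting the admissible root. The paper first locates the four complex zeros of $\gamma(L)$ via $(s-1)^4/s^2=-\delta$, shows they form a quadruple $(s,\bar s,1/s,1/\bar s)$ with $\abs{s}<1$, and then identifies the correct quartic root as the one corresponding to $-(s+\bar s)$ using the two conditions ``real'' and ``$\abs{s+\bar s}<2$''. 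You instead depress the quartic via $\theta_1=s-2$ to the biquadratic $s^4+\delta s^2-4\delta=0$, note that only one branch yields real $s$, and verify invertibility of the chosen sign directly: $0<s<2$ gives $\theta_2=(2-s)/(2+s)\in(0,1)$ and a negative discriminant $\theta_1^2-4\theta_2=-s^2(2-s)/(2+s)$, so the roots of $1+\theta_1 L+\theta_2 L^2$ form a conjugate pair of modulus $\theta_2^{-1/2}>1$. This is a more elementary and self-contained disambiguation---it avoids the complex-root bookkeeping entirely---and you also supply the explicit positivity check $\Gamma(e^{-i\lambda})=16\sigma_\epsilon\sin^4(\lambda/2)+\sigma_\xi>0$ needed to invoke the Gohberg--Lancaster--Rodman uniqueness, which the paper leaves implicit in its root-location argument. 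All of your intermediate computations (the biquadratic, the bound $s^2<4$, the discriminant identity) check out.
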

\begin{proof}
In the scalar case, the autocovariance generating function takes the form
\[
 \gamma(L) = \frac{(L-1)^4}{L^2} \sigma_\epsilon + \sigma_\xi.
\]
The complex zeros of $\gamma(L)$ can be determined through the following process
\begin{align}
 \frac{(s-1)^4}{s^2} &= -\delta, \label{realeq} \\
 \frac{(s-1)^2}{s} &= \pm i \sqrt{\delta}, \nonumber \\
 0  & =   s^2 - \left(2\pm i\sqrt{\delta}\right)s +1, \label{quadratics}\\
 s & =  \frac{2 + it \pm \sqrt{4it-t^2}}{2}, \quad \text{where $t=\pm \sqrt{\delta}$} \label{sols},
\end{align}
where the last step is simply the formula for the solution of a quadratic equation. The possible choices of the $\pm$ signs in \eqref{sols} give the four solutions. Each of the two  quadratic equations in \eqref{quadratics} has two solutions with product 1 and sum $2\pm i \sqrt{\delta}$, by the roots-coefficients relations. Since their product is $1$, one of them lies inside the unit circle and one lies outside (they cannot have both modulus 1, otherwise their sum would have modulus at most 2, which is in contradiction with $\abs{2\pm i\sqrt{\delta}}>2$). Moreover, since \eqref{realeq} is a real equation, we know that the solutions come in conjugate pairs. Putting all together, we have proved that the solutions returned by the formula \eqref{sols} can be written as $(s, \bar{s}, 1/s, 1/\bar{s})$ for some complex number $s$ with $\abs{s}<1$.

To respect invertibility, the polynomial $1+\theta_1 L + \theta_2 L^2$ must have as its roots the two roots with modulus larger than $1$, i.e., $1/s$ and $1/\bar{s}$. Hence,
\[
 1+\theta_1 L + \theta_2 L^2 = (1 - (s+ \bar{s})L + s \bar{s} L^2).
\]
To avoid troubles with the signs, we derive an equation for $\theta_1$ directly. The univariate equivalent of \eqref{canfact} is
\begin{equation} \label{smallcanfact}
 \gamma(L)=(\theta_2 L^2 + \theta_1 L + 1)\omega(1+\theta_1L^{-1}+\theta_2L^{-2}).
\end{equation}
Equating coefficients in \eqref{smallcanfact}, we get the following system of equations
\begin{equation} \label{relations}
\begin{aligned}
 \sigma_\epsilon= \gamma_2 &= \theta_2 \omega,\\
 -4\sigma_\epsilon=\gamma_1 &= \theta_1 (\theta_2+1) \omega,\\
 6\sigma_\epsilon + \sigma_\xi = \gamma_0 &= \omega(1+\theta_1^2 + \theta_2^2)
\end{aligned}
\end{equation}
(since we are in the scalar case, for the sake of clarity we replaced each uppercase letter with the corresponding lowercase one). We can easily derive
\begin{equation} \label{thetasigma}
 \theta_2 = \frac{-\theta_1}{4+\theta_1}, \quad \omega = \theta_2^{-1} \sigma_\epsilon,
\end{equation}
and use these relations to eliminate variables and get a single equation in $\theta_1$. After some computations, we obtain
\[
 \theta_1^4 + 8 \theta_1^3 + (24+\delta)\theta_1^2 + (32+4\delta)\theta_1 + 16=0,
\]
whose four solutions are given by choosing $\pm$ signs in 
\[
 -2 \pm \frac12 \sqrt{-2\delta \pm 2\sqrt{\delta^2+16\delta}}.
\]
Only one of these solutions corresponds to $s+\bar{s}$ (the other ones being $\frac1s+\bar{s}$ $s+\frac{1}{\bar{s}}$, $\frac1s+\frac{1}{\bar{s}}$, which would give rise alternative non-invertible MA(2) representations). It is easy to tell which one is correct: it should be real, and this implies that we have to take the $+$ sign on the right; and since $\abs{s+\bar{s}}<2$ we need the plus sign on the left as well. 

Finally, one can use \eqref{thetasigma} to get back $\theta_1$ and $\omega$. 
\end{proof}

%It is relevant to remark that there is a unique relation between the reduced form parameters and those of the structural form (\ref{eq:MT}). Indeed, one can also show that $\sigma_{\epsilon}$and $\sigma_{\xi}$ are exact function of the two MA(2) parameters (given the constraint as in (\ref{thetasigma})), such as:

%\begin{displaymath}
%\sigma_\epsilon=\gamma_2=\omega \frac{-\theta_1}{4+\theta_1}\quad \sigma_{\eta} =\gamma_0-6\gamma_2=\frac{(\text{$\theta $1}+2)^4 \omega }{(\text{$\theta $1}+4)^2} 
%\end{displaymath}
%or equivalently:
%\begin{displaymath}
%\sigma_\epsilon=\omega \theta_2 \quad \sigma_{\eta} =\frac{(\text{$\theta $2}-1)^4 \omega }{(\text{$\theta $2}+1)^2}
%		\end{displaymath}

%This relation allows estimating the structural parameters using the estimates of the contrained MA(2). As a consequence, the associated signal-noise ratio can can be then used for signal extraction. We will see in the next section that our reasoning can be generalized in the vector case. 

\subsection{The multivariate case}
As claimed in the introduction, it is possible to make a linear change of variable that transforms \eqref{eq:MT} into $d$ separate uncorrelated processes. We show its form explicitly in the next result.
\begin{Proposition} \label{prop2}
Let $z_t$ follow the model \eqref{eq:VMA}, and suppose that $\Sigma_\epsilon$ and $\Sigma_\xi$ are positive definite. Let $\Sigma_\epsilon = M'M$ be a Cholesky decomposition, and $(M')^{-1}\Sigma_{\xi}M^{-1} = Q\Delta Q'$, with $Q Q'=I$ and $\Delta=\diag(\delta_1,\delta_2,\dots,\delta_d)$, be an eigendecomposition, and let $P=M'Q$.

Let moreover
\[
 \alpha_k =  -2 + \frac12 \sqrt{-2\delta_k + 2\sqrt{\delta_k^2+16\delta_k}}, \quad \beta_k = \frac{-\alpha_k}{4+\alpha_k} = 
 \frac{4- \sqrt{-2\delta_k + 2\sqrt{\delta_k^2+16\delta_k}}}{4+ \sqrt{-2\delta_k + 2\sqrt{\delta_k^2+16\delta_k}}}.
\]
Then, the unique invertible VMA(2) representation of $z_t$ is given by
\begin{align}
 \Theta_1& =P\diag(\alpha_1,\alpha_2,\dots,\alpha_d)P^{-1}, \quad \Theta_2=P\diag(\beta_1,\beta_2,\dots,\beta_d) P^{-1}, \label{vmamulti}  \\
 \Omega & =\var u_t = P \diag(\beta_1^{-1},\beta_2^{-1},\dots,\beta_d^{-1}) P' = \Theta_2^{-1}\Sigma_\epsilon. \label{vmavar}
\end{align}
\end{Proposition}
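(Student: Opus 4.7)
The plan is to reduce the proposition to Proposition~\ref{scalarcase} via a linear change of variables that simultaneously diagonalizes $\Sigma_\epsilon$ and $\Sigma_\xi$. This is precisely what the combination of Cholesky factorization and eigendecomposition encoded by $P = M'Q$ achieves. Setting $\tilde z_t = P^{-1} z_t$ and using $P^{-1} = Q'(M')^{-1}$, one verifies that $P^{-1}\Sigma_\epsilon (P^{-1})' = I$ and $P^{-1}\Sigma_\xi (P^{-1})' = \Delta$.

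Consequently, the ACGF of $\tilde z_t$ becomes
\[
\tilde\Gamma(L) = P^{-1}\Gamma(L)(P^{-1})' = \frac{(L-1)^4}{L^2} I + \Delta,
\]
which is diagonal, so $\tilde z_t$ decouples into $d$ uncorrelated scalar MA(2) processes, the $k$-th having $\sigma_\epsilon = 1$ and $\sigma_\xi = \delta_k$. Proposition~\ref{scalarcase} applied componentwise then yields the invertible scalar MA(2) parameters $(\alpha_k,\beta_k,\beta_k^{-1})$, and stacking them along the diagonal gives the invertible VMA(2) representation $\tilde z_t = (I + \diag(\alpha_k)L + \diag(\beta_k)L^2)\tilde u_t$ with $\var \tilde u_t = \diag(\beta_k^{-1})$. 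Transforming back through $z_t = P\tilde z_t$ and $u_t = P\tilde u_t$ yields \eqref{vmamulti} directly and $\Omega = P\diag(\beta_k^{-1})P'$; the identity $\Omega = \Theta_2^{-1}\Sigma_\epsilon$ stated in \eqref{vmavar} then follows from the short calculation $P^{-1}\Sigma_\epsilon = Q'(M')^{-1}M'M = Q'M = P'$, which gives $\Theta_2^{-1}\Sigma_\epsilon = P\diag(\beta_k^{-1})P^{-1}\Sigma_\epsilon = P\diag(\beta_k^{-1})P' = \Omega$.

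For uniqueness, I would invoke the Gohberg--Lancaster--Rodman canonical factorization theorem cited just before the statement. Its hypothesis—that $\Gamma(z)$ be invertible on the unit circle—is easy to verify: for $z = e^{i\theta}$ one has $(z-1)^4/z^2 = 16\sin^4(\theta/2)\geq 0$, so $\Gamma(e^{i\theta})$ is the sum of a positive semidefinite matrix and the positive definite matrix $\Sigma_\xi$, hence is itself positive definite. The factorization built above is canonical because, after the change of variables, $\det(I+\Theta_1 L + \Theta_2 L^2) = \prod_k (1+\alpha_k L + \beta_k L^2)$ has all its roots outside the unit disk by the scalar case, so uniqueness of the representation follows.

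The main obstacle is really just the bookkeeping of the change of variable, i.e., recognizing that $P = M'Q$ is exactly the matrix that turns both $\Sigma_\epsilon$ and $\Sigma_\xi$ into diagonal form under the congruence $X \mapsto P^{-1}X(P^{-1})'$. Once the problem has been diagonalized, the scalar result does all the heavy lifting and the back-transformation is immediate.
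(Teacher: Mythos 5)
Your proof is correct and follows essentially the same route as the paper: the change of variables $\tilde z_t = P^{-1}z_t$ diagonalizes both covariance matrices, the scalar Proposition~\ref{scalarcase} is applied componentwise, and the result is transformed back. You are in fact slightly more complete than the paper's own proof, since you explicitly verify the identity $\Omega=\Theta_2^{-1}\Sigma_\epsilon$ and check the positive-definiteness hypothesis of the canonical factorization theorem needed for uniqueness, both of which the paper leaves to the surrounding discussion.
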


\begin{Remark} %\label{remarkprop2}
When $\Sigma_{\xi}$ is positive semi definite with some roots equal to zero then the model contains common trends. This is the case when cointegration arises. In this case some of the $\delta_k$ are equal to zero and therefore the corresponding $\alpha_k$ and $\beta_k$ are equal to -2 and 1 respectively. 
\end{Remark}
Notice that $\Sigma_\xi \Sigma_\epsilon^{-1} = M' (M')^{-1} \Sigma_\xi M^{-1}(M')^{-1} = M' Q \Delta Q'(M')^{-1}= P\Delta P^{-1}$, hence $P$ is an eigenvector basis of $\Sigma_\xi \Sigma_\epsilon^{-1}$. One may wish to choose $P$ as an arbitrary eigenvector basis of $\Sigma_\xi \Sigma_\epsilon^{-1}$ instead of the more complicated definition in the theorem. In that case, then \eqref{vmamulti} and $\Omega=\Theta_2^{-1}\Sigma_\epsilon$ still hold, while the other equality in \eqref{vmavar} may fail.

%\section{Proof of Proposition~\ref{prop1}}
\begin{proof}
The autocovariance matrices of $z_t$ in (\ref{eq:VMA}) are
\begin{align*}
\Gamma_{2}&=E\left(z_t z_{t-2}'\right)=\Sigma_{\epsilon},\\
\Gamma_{1}&=E\left(z_t z_{t-1}'\right)= -4\Sigma_{\epsilon},  \\
\Gamma_{0}&=E\left(z_t z_{t}'\right)=6\Sigma_{\epsilon}+\Sigma_{\xi}.
\end{align*}
In addition, $\Gamma_{-2}=\Gamma_2'=\Gamma_2$ and $\Gamma_{-1}=\Gamma_1'=\Gamma_1$, and all the other autocovariances $\Gamma_i$ with $\abs{i}>2$ are zero.

Since only the two symmetric matrices $\Sigma_\epsilon$, $\Sigma_\xi$ appear in these expressions, one can find a change of variables that decouples the components of $z_t$. Namely, we set $\tilde{z}_t:= P^{-1}z_t$; in this way, it is easy to verify that $P^{-1}\Sigma_\epsilon (P')^{-1}=I$ and $P^{-1}\Sigma_\xi (P')^{-1}=\Delta$, and thus
\begin{align*}
\tilde{\Gamma}_{2}&=E\left(\tilde{z}_t \tilde{z}_{t-2}'\right)=I,\\
\tilde{\Gamma}_{1}&=E\left(\tilde{z}_t \tilde{z}_{t-1}'\right)= -4 I,  \\
\tilde{\Gamma}_{0}&=E\left(\tilde{z}_t \tilde{z}_{t}'\right)=6 I+\Delta,
\end{align*}
and again $\tilde{\Gamma}_{-2}=\tilde{\Gamma}_2'=\tilde{\Gamma}_2$ and $\tilde{\Gamma}_{-1}=\tilde{\Gamma}_1'=\tilde{\Gamma}_1$, while $\tilde{\Gamma}_k=0$ for $\abs{k}>2$.

Hence each of the components of $\tilde{z}_t$ follows a scalar MA(2) uncorrelated from the other components, with $\Sigma_\epsilon=1$ and $\Sigma_\xi=\delta_k$, where $k$ is the index of the component. Using Proposition~\ref{scalarcase}, one can thus derive
\[
 (\tilde{z}_t)_k = (1 + \alpha_k L + \beta_k L^2) (\tilde{u}_t)_k, \quad \var (\tilde{u}_t)_k = \beta_k^{-1}.
\]

Finally, we undo the change of variables used to define $\tilde{z}_t$, to obtain that $z_t$ follows the VMA(2) process~\eqref{vmamulti}--\eqref{vmavar}, where $u_t := P \tilde{u}_t$.
\end{proof}

Therefore it is possible to reconstruct the parameters of the reduced form of the process, in closed form, by knowing only its autocovariances. In empirical analysis, one might be tempted to use the sample covariance estimates such as $\hat{\Gamma}_0=N^{-1}\sum_{t=1}^{N}z_t z_t'$ and $\hat{\Gamma}_2=N^{-1}\sum_{t=1}^{N}z_t z_{t-2}'$ but this is generally not recommended due to lack of accuracy. 

In the next section, we provide a method that allows estimating more accurately $\Sigma_\epsilon=\Gamma_2$ and $\Sigma_\xi=\Gamma_0-6\Gamma_2$.

\section{Moment estimation through aggregation (META)} \label{sec:META}
The closed-form results obtained in the previous section are relevant in empirical analysis if we have accurate estimates of the autocovariances of (\ref{eq:VMA}). Indeed, these allow one to reconstruct the signal-noise matrix ratio and therefore extract the multiple trends from system (\ref{eq:MT}). In this section, we describe an estimator of the autocovariances that generalizes the algorithm of Poloni \& Sbrana(2014) to this model. The main cost of the estimation procedure consists in estimating several univariate constrained MA(2) models of the form (\ref{thetasigma}); this makes it quick and practical even in cases of large dimension, without convergence issues. We then combine these estimates to obtain the desired autocovariances $\Gamma_k$. Beside these practical advantages, the reader should be aware that the META approach does not guarantee to yield positive definite $\hat{\Sigma}_{\xi}$ and $\hat{\Sigma}_{\epsilon}$. This is especially true in small samples and also when we approach the cointegration case (that is when the one or more roots of $\hat{\Sigma}_{\xi}$ are closed to zero). This issue and a simple proposal to fix it are discussed below. 

The estimation procedure can be described as follows. Let
\[
 \mathcal{W} :=  \{e_i \colon 1\leq i \leq d \} \cup \{e_i+e_j \colon 1 \leq i < j \leq d \},
\]
where $e_i$ denotes the $i$-th column of the identity matrix $I_d$.
\begin{enumerate}
 \item For each of the $\frac{d(d+1)}{2}$ vectors $w\in\mathcal{W}$, construct the aggregated scalar process $x^{(w)}_t:=w'z_t$, and estimate it using a maximum likelihood estimator.
 \item \label{smallacgf} Using the estimated parameters of the process $x^{(w)}_t$, compute autocovariances $\tilde{\gamma}^{(w)}_k$ (\emph{not} sample autocovariances!) and construct the (scalar) ACGF $\tilde{\gamma}^{(w)}(L)$ of $x^{(w)}_t$.
 \item As the ACGF $\Gamma(L)$ of $y_t$ is symmetric, it is possible to reconstruct it in closed form by knowing $\gamma^{(w)}(L)=w'\Gamma(L)w$ for each $w\in\mathcal{W}$. Indeed, by Poloni \& Sbrana(2014), Lemma~5,
\begin{equation}  \label{GammaFromGammini}
  (\Gamma_k)_{i,j}=\begin{cases}
                    \gamma^{(e_i)}_k & i=j,\\
                    \frac12 \left(\gamma^{(e_i+e_j)}_k - \gamma^{(e_i)}_k - \gamma^{(e_j)}_k \right) & i \neq j.
                   \end{cases}
\end{equation}
 \item Use the method in Section 2 to compute the signal-noise matrix ratio as well as the parameters of the VMA representation of $z_t$.
\end{enumerate}
As can be seen easily from the formula $\gamma^{(w)}(L)=w'\Gamma(L)w$, the aggregated scalar processes are MA(2) with $\gamma^{(w)}_1=-4\gamma^{(w)}_2$. As in \eqref{thetasigma}, we can obtain from this relation
\begin{equation} \label{relation}
\theta_2^{(w)}=\frac{-\theta_1^{(w)}}{4+\theta_1^{(w)}}.
\end{equation}
We estimate them using a constrained maximum-likelihood estimator enforcing~\eqref{relation}. The procedure produces $\tilde{\theta}_1^{(w)}, \tilde{\theta}_2^{(w)}=\frac{-\tilde{\theta}_1^{(w)}}{4+\tilde{\theta}_1^{(w)}}$ and $\tilde{\omega}^{(w)}$.
We can then use the formulas (cfr.~\eqref{relations})
\begin{equation}
\begin{aligned}
 \tilde{\gamma}_2 &= \tilde{\theta}^{(w)}_2 \tilde{\omega}^{(w)},\\
 \tilde{\gamma}_1 &= \theta^{(w)}_1 (\tilde{\theta}^{(w)}_2+1) \tilde{\omega}^{(w)}=-4\gamma_2,\\
 \tilde{\gamma}_0 &= \tilde{\omega}^{(w)}(1+(\tilde{\theta}^{(w)}_1)^2 + (\tilde{\theta}^{(w)}_2)^2).
\end{aligned}
\end{equation}
Notice that the estimated autocovariances $\tilde{\Gamma}_{k}$ of the multivariate process will satisfy automatically the constraint $\tilde{\Gamma}_1=\tilde{\Gamma}_1'$ and   $\tilde{\Gamma}_{1}=-4\tilde{\Gamma}_{2}$, by the linearity of \eqref{GammaFromGammini}. If one were to use a direct multivariate maximum-likelihood estimator, these conditions would be harder to represent explicitly as a restriction on the parameters $\Theta_1$ and $\Theta_2$, and hence troublesome to estimate.

Finally, it can be shown that the META estimator is consistent and asymptotically normal distributed. Indeed, the following theorem holds.
\begin{Theorem} \label{METAprop}
Consider the META estimator described in Section~\ref{sec:META} for the reduced parameters $\Theta_1$ and $\Theta_2$ of the invertible, stationary and ergodic process \eqref{eq:MT} with i.i.d. noises with variances \eqref{eq:Matrix1}. The estimator is consistent. If, in addition, the fourth moments of the noise vector $\m{\epsilon_t \\ \xi_t}$ are finite, then the estimator is asymptotically normal.
\end{Theorem}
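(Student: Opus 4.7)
The plan is to derive consistency and asymptotic normality by tracing the META procedure step by step and transferring stochastic convergence through each stage via continuous-mapping and delta-method arguments.

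First, for each $w\in\mathcal{W}$, the aggregated scalar process $x_t^{(w)}=w'z_t$ is itself a stationary, ergodic, invertible MA(2), with autocovariances inheriting the constraint $\gamma_1^{(w)}=-4\gamma_2^{(w)}$. Classical results for constrained Gaussian MLE of invertible scalar MA processes (see e.g.\ Hannan(1973), or Brockwell \& Davis Ch.~10) give consistency of $(\tilde{\theta}_1^{(w)},\tilde{\omega}^{(w)})$. The scalar moments $\tilde{\gamma}_k^{(w)}$ are smooth functions of these, hence also consistent, and the linear identity \eqref{GammaFromGammini} propagates consistency to each $\tilde{\Gamma}_k$.

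Second, to pass from autocovariances to reduced-form parameters, observe that the map $(\Gamma_0,\Gamma_2)\mapsto(\Sigma_\epsilon,\Sigma_\xi)$ is linear, and the composite map $(\Sigma_\epsilon,\Sigma_\xi)\mapsto(\Theta_1,\Theta_2,\Omega)$ described in Proposition~\ref{prop2} is continuous at any true parameter for which the generalized eigenvalues $\delta_k$ of $\Sigma_\xi\Sigma_\epsilon^{-1}$ are positive and simple. Consistency of $(\tilde{\Theta}_1,\tilde{\Theta}_2,\tilde{\Omega})$ then follows from the continuous mapping theorem.

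Third, for asymptotic normality one must handle the dependence between the $\frac{d(d+1)}{2}$ scalar estimators, which are built from overlapping linear combinations of the same sample. The approach is to express the joint vector of MLEs asymptotically as a smooth function of the sample autocovariances $\hat{\Gamma}_0,\hat{\Gamma}_1,\hat{\Gamma}_2$ of the multivariate process $z_t$, and then invoke a multivariate CLT for sample autocovariances of stationary ergodic linear processes under finite fourth moments (Hannan, Ch.~4, Thm.~4.4). Joint asymptotic normality of the scalar MLEs then follows by the delta method, exactly as in Poloni \& Sbrana(2014), Theorem 4. A second application of the delta method to the smooth composite $(\tilde{\theta}_1^{(w)},\tilde{\omega}^{(w)})\mapsto\tilde{\gamma}_k^{(w)}\mapsto\tilde{\Gamma}_k\mapsto(\tilde{\Theta}_1,\tilde{\Theta}_2,\tilde{\Omega})$ yields the asymptotic normality of the META estimator, with a sandwich-form covariance that can in principle be written in closed form.

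The main obstacle is the differentiability of the canonical-factorization map $(\Sigma_\epsilon,\Sigma_\xi)\mapsto(\Theta_1,\Theta_2,\Omega)$ at the true parameter, since this map is built from an eigendecomposition, which is only smooth when the eigenvalues $\delta_k$ are simple. At such generic points, standard analytic perturbation theory for eigenvalues and eigenvectors (Kato) delivers the required smoothness and an explicit Jacobian; for degenerate configurations the same conclusion can be recovered by working with spectral projectors instead of individual eigenvectors, since projectors depend analytically on the matrix entries regardless of multiplicities. All remaining maps (the scalar formulas of Proposition~\ref{scalarcase}, the linear combination~\eqref{GammaFromGammini}, and the scalar map between MA parameters and autocovariances) are manifestly smooth at any invertible-MA parameter, so the delta-method argument goes through.
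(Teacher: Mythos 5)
Your overall architecture (scalar constrained MLEs $\to$ estimated autocovariances $\to$ multivariate parameters via the closed-form maps, with consistency transferred by continuous mapping and normality by the delta method) matches the paper's, and your first two steps, as well as your discussion of the smoothness of the eigendecomposition map, are sound. The genuine gap is in your third step: the claim that the joint vector of scalar MLEs is asymptotically a smooth function of the sample autocovariances $\hat{\Gamma}_0,\hat{\Gamma}_1,\hat{\Gamma}_2$ is false. The (quasi-)likelihood score of an MA model at time $t$ is proportional to $v_t v_t'$, where $v_t'=\partial v_t/\partial\theta_1$ is an infinite linear combination of \emph{all} past innovations; averaging over $t$ produces a weighted sum of sample autocovariances of $x_t^{(w)}$ at \emph{every} lag, not just lags $0$, $1$, $2$. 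Indeed, if the MLE were asymptotically a smooth function of the first three sample autocovariances, the delta method would force its asymptotic variance to coincide with that of the corresponding method-of-moments estimator, which is known to be strictly larger than the MLE's for invertible MA processes. So the reduction to a CLT for $(\hat{\Gamma}_0,\hat{\Gamma}_1,\hat{\Gamma}_2)$ cannot deliver the asymptotic distribution of the META estimator.

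What is needed instead --- and what the paper actually does --- is a CLT applied directly to the joint score vector of the $\frac{d(d+1)}{2}$ scalar likelihoods. Since the scores are products of causal linear processes driven by the common noise $\m{\epsilon_t \\ \xi_t}$, one must verify a weak-dependence condition for such products; the paper's Lemma~\ref{ourlemma} shows that $\norm{\xi_0-\E{\xi_0 \mid \mathcal{F}^0_{-t}}}_2$ decays geometrically when $\xi_t$ is a centered product of exponentially decaying causal linear processes, which (together with the invertibility of each $H(L)$) places the scores in the framework of Ling \& McAleer (2010) and of Poloni \& Sbrana (2014), Theorems~11--12. This is precisely where the finite-fourth-moment hypothesis enters; your proposal omits this step entirely, so the asymptotic normality claim is not established. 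Two smaller points: the noises are not assumed Gaussian, so the scalar estimators are quasi-MLEs and you should verify the conditions of a QMLE theorem rather than cite Gaussian MA theory; and the paper also verifies identification explicitly, i.e., that the expected per-observation negative log-likelihood has an isolated minimum with nonsingular Hessian at the true constrained parameter, which your appeal to classical results glosses over.
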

A proof can be found in the Appendix. It is relevant to remark the advantages of using the META estimator. Indeed the standard maximum likelihood estimation of the system (\ref{eq:MT}) is not trivial especially for medium-high dimensional systems. Similar conclusions hold for the likelihood estimation of the reduced form (\ref{eq:VMA}). By making use of a univariate estimation approach, the META provides a simple estimation alternative, based on the likelihood principle (providing accurate estimates), much faster than the full multivariate likelihood approach. The procedure can be implemented by standard packages since it does not adopt any sophisticated maximization algorithm.

As already mentioned above, the major issue with this numerical procedure is that there is no guarantee that the estimated process yields positive definite values of the structural parameters $\tilde{\Sigma}_\epsilon=\tilde{\Gamma}_2$ and $\tilde{\Sigma}_\xi=\tilde{\Gamma}_0-6\tilde{\Gamma}_2$. This is especially true for $\Sigma_\xi$, since it is derived as the difference of two estimated matrices and hence it might suffer from error accumulation.

A simple fix for this issue is enforcing positivity by adding a suitable multiple of the identity to $\Sigma_\epsilon$ and $\Sigma_\xi$ when necessary. Ultimately, this can be justified by the assumption that our data come from a model of the form~\eqref{eq:MT}, thus their positivity is a modelling requirement. We describe this regularization procedure in more detail for our practical example.

\subsection{Trend extraction}
The key factor for trend extraction using the Hodrick-Prescott filter is the signal-noise ratio. In the scalar case, the ratio can either be chosen arbitrarily (as proposed in Hodrick-Prescott(1997)), or can be estimated using the result in Proposition \ref{scalarcase}. This choice is also present in the multivariate case. In the vector case, we have shown how to estimate the signal-noise matrix ratio $\Sigma_\xi \Sigma_\epsilon^{-1} =\Gamma_0\Gamma_2^{-1}-6 I= P\Delta P^{-1}$. Each of the diagonal elements of $\Delta$ represents a scalar signal-noise ratio that can be used to filter the associated component of the transformed system $\tilde{y}_t:= P^{-1}y_t$, whose VARMA representation is diagonal. One can then obtain the smooth-trends of the system $y_t$ by premultiplying the transformed trends $\tilde{\mu}_t$ by the matrix $P$. 

A possible variant is choosing arbitrarily the values of the signal-noise ratios on the uncorrelated processes $\tilde{y}_t$ rather than on $y_t$. This produces a trend extraction method that, while still choosing SNRs arbitrarily, keeps into account the fact that the covariance matrices (\ref{eq:Matrix1}) are in general not diagonal and the processes are correlated.

A direct multivariate approach for trend extraction, that does not diagonalize the system, is suggested by McElroy \& Trimbur(2015). These authors provide also results for the trend model as in (\ref{eq:MT}) when $\Sigma_{\xi}$ has reduced rank. 

\section{Detrending multiple time series: an empirical application}

Here we provide an empirical example dealing with the industrial production of some major European economies. The time series data, relative to the total industrial production by country, have been downloaded from the online statistical database of the Organisation for Economic Co-operation and Development (2014). We employed monthly seasonally adjusted observations (expressed as index numbers) for the following eight countries: Belgium, France, Germany, Italy, Netherlands, Portugal, Spain, United Kingdom. The sample runs from May $1974$ until March $2014$ for a total of $479$ observations.

As described in Section~\ref{sec:META}, the bulk of the estimation cost consists in estimating $\frac{d(d+1)}{2}=36$ scalar constrained MA(2) models, followed by some quick computations. For our experiments we chose to use Wolfram Mathematica v9, since that software contains a ML estimator general enough to allow enforcing the constraint~\eqref{relation} explicitly. Note that the scalar estimator used is \emph{unconditional} maximum likelihood.

The estimated covariance matrix $\tilde{\Sigma}_\xi$ had a small negative eigenvalue $\approx -1.5\times 10^{-2}$. Indeed, as we noted above, there is no guarantee that the method produces positive definite estimates in empirical examples. To address this issue, we regularized the estimate by adding a suitable multiple of the identity to $\tilde{\Sigma}_\xi$, i.e., $\hat{\Sigma}_\xi=\tilde{\Sigma}_\xi+\alpha I_8$. We chose $\alpha=0.0015428533$. This number is chosen so that the smallest eigenvalue of $\hat{\Sigma}_\xi\hat{\Sigma}_\epsilon^{-1}$ equals $\frac{1}{14400}=0.0000694$, which is the standard signal-noise ratio recommended in software packages such as EViews for Hodrick-Prescott filtering of monthly data.

We obtain the following estimates for the covariance matrices of the noises (\ref{eq:Matrix1})
\begin{displaymath}
\small
\tilde{\Sigma}_{\epsilon}=\left(
\begin{array}{cccccccc}
 0.8429 &&&&&&& \\
 0.02121 & 0.7632 &  &  &  &  &  &\\
 0.09744 & 0.1419 & 0.7497 &&  &  &  &   \\
 0.1871 & 0.06468 & 0.06268 & 1.159 & &&&\\
 -0.1544 & 0.2812 & 0.06318 & 0.1073 & 2.583 & && \\
 0.06602 & 0.2387 & 0.2089 & 0.265 & 0.1438 & 3.048 & & \\
 0.06847 & 0.247 & 0.08804 & 0.131 & 0.1751 & 0.3083 & 1.686 &  \\
 0.09582 & 0.1087 & 0.0776 & 0.1215 & 0.1535 & 0.1928 & 0.1191 & 0.6269 \\
\end{array}
\right),
\normalsize
\end{displaymath}  
\begin{displaymath}
\small
\hat{\Sigma}_{\xi}=\left(
\begin{array}{cccccccc}
 0.07748 &&&&&&&\\
 0.06312 & 0.05871 &&&&&&\\
 0.08688 & 0.08175 & 0.1239 &&&&&\\
 0.0734 & 0.06911 & 0.09324 & 0.1618 &&&&\\
 0.04262 & 0.03635 & 0.05105 & 0.05275 & 0.03846 &&& \\
 0.02446 & 0.01703 & 0.02533 & 0.03828 & 0.01042 & 0.02073 &&\\
 0.04151 & 0.04143 & 0.05226 & 0.06523 & 0.02957 & 0.01757 & 0.04876 & \\
 0.03398 & 0.03357 & 0.05591 & 0.03005 & 0.02341 & 0.005987 & 0.01746 & 0.04468 \\
\end{array}\right).
\normalsize
\end{displaymath}

%\begin{displaymath}
%\begin{matrix}
% 0.428264, &
% 0.069185, &
% 0.0229413, &
% 0.0166302, &
% 0.013825, &
% 0.0104906, &
% 0.000659946, &
% 0.000187146.
%\end{matrix}
%\end{displaymath}
%Two eigenvalues are quite close to zero indicating the presence of common trends (see Harvey(1989)). 

The META estimates are very close to those produced by STAMP 8.2 (Koopman, Harvey, Doornik \& Shephard, 2007), which employs a maximum likelihood estimator for the structural system. Indeed, a comparison between the META and the STAMP output shows that the relative errors in the Frobenius norm (root mean squared error of the matrix entries) are
\begin{equation} \label{rmseEPSILON}
\frac{\left\|\Sigma_{\epsilon}^{META}-\Sigma_{\epsilon}^{STAMP}\right\|_F}{\left\|\Sigma_{\epsilon}^{STAMP}\right\|_F}=0.073,
\end{equation}
\begin{equation} \label{rmseXI}
\frac{\left\|\Sigma_{\xi}^{META}-\Sigma_{\xi}^{STAMP}\right\|_F}{\left\|\Sigma_{\xi}^{STAMP}\right\|_F}=0.177,
\end{equation}
\begin{equation} \label{rmsePredictionCovariance}
\frac{\left\|\Omega^{META}-\Omega^{STAMP}\right\|_F}{\left\|\Omega^{STAMP}\right\|_F}=0.074,
\end{equation}
where $\left\|\cdot\right\|_F$ is the Frobenius norm. 

However, while the estimation in STAMP takes about $40$ seconds, the one with META takes just $14$ seconds. This shows the clear computational advantages of using our simple estimator.   
Using the estimated matrices, the signal-noise matrix ratio is derived using the procedure described in Proposition~\ref{prop2}. We have
\begin{displaymath}
\small
P=\left[\begin{array}{cccccccc}
  0.4601 & 0.08478 & -0.4455 & 0.4898 & -0.063 & -0.2311 & 0.109 & 0.3412 \\
 0.4225 & 0.07566 & -0.1639 & -0.1915 & -0.134 & 0.05911 & 0.3999 & -0.578 \\
 0.6077 & 0.2573 & 0.0001279 & -0.3538 & 0.2624 & -0.0273 & -0.342 & 0.04939 \\
 0.5831 & -0.7558 & 0.4043 & 0.2669 & 0.05411 & -0.07511 & 0.06081 & 0.03422 \\
 0.2834 & -0.01842 & -0.03882 & 0.4361 & -0.543 & 0.9852 & -0.7428 & -0.7025 \\
 0.1505 & -0.1733 & -0.09401 & 0.2449 & 0.11 & -1.192 & -0.6928 & -1.007 \\
 0.3025 & -0.2091 & -0.08031 & -0.4969 & -1.071 & -0.3526 & -0.1477 & 0.0638 \\
 0.259 & 0.3946 & 0.4872 & 0.2695 & -0.2398 & -0.1559 & 0.0672 & -0.08842 \\
\end{array}
\right]
\normalsize
\end{displaymath}
and 
\begin{displaymath}
\small
\Delta=\left(
\begin{array}{cccccccc}
0.3127 & 0. & 0. & 0. & 0. & 0. & 0. & 0. \\
 0. & 0.08487 & 0. & 0. & 0. & 0. & 0. & 0. \\
 0. & 0. & 0.03726 & 0. & 0. & 0. & 0. & 0. \\
 0. & 0. & 0. & 0.01205 & 0. & 0. & 0. & 0. \\
 0. & 0. & 0. & 0. & 0.01085 & 0. & 0. & 0. \\
 0. & 0. & 0. & 0. & 0. & 0.0054 & 0. & 0. \\
 0. & 0. & 0. & 0. & 0. & 0. & 0.004519 & 0. \\
 0. & 0. & 0. & 0. & 0. & 0. & 0. & 0.00006944 \\
\end{array}
\right).
\normalsize
\end{displaymath}

The values of $\Delta$ are then used to extract separately the trend from each series of the system $\tilde{y}_t:= P^{-1}y_t$. This univariate procedure can be easily carried out with any statistical software. Here we make use of EViews version 8. Once these trends are extracted, the final step is pre-multiplying these trends by the matrix $P$ in order to obtain trends for the original multivariate system $y_t$. 

Here we compare our empirical results obtained using the META approach with two standard univariate approaches. The first approach is the ARIMA-model-based (AMB) approach as suggested by Kaiser \& Maravall(2005). The AMB approach can be considered as the univariate analogous of the META approach since the signal-noise ratio is estimated (rather than fixed apriori) by employing an ARIMA(0,2,2) model separately for each series of the system. This procedure is implemented in Eviews through the function SEATS (“Signal Extraction in ARIMA Time Series”) by performing an ARIMA-based decomposition of an observed time series into unobserved components. The SEATS algorithm in Eviews was developed by Victor Gomez and Agustin Maravall.\\ 
The second approach, the most employed in standard practice, consists of fixing the signal-noise ratio to a prescribed value (see for example Hodrick \& Prescott (1997)). For monthly series EViews suggests a signal-noise ratio of $\frac{1}{14400}=0.0000694$.
The main difference between these two univariate approaches is that fixing the signal-noise ratio results in extracting a much smoother trend compared to estimating the signal-noise ratio using an ARIMA approach. This in evident in our empirical results. Figures 1--8 report the industrial production index (in gray) together with the trends extracted using the META approach (thick black line), the AMB approach (dotted black line) and the approach that fix the signal-noise ratio (tiny black line). As noted above, the estimation sample for the three competing approaches is 1974-2014. However, for the sake of clarity, for each country we show the results in two separate charts; one referring to the sample 1974-1994 and the other one to the sample 1994-2014. This is done in order to better focus and compare the different extracted trends. \\
First of all, one can observe that fixing the smoothing constant provide much smoother trends compared to the AMB approach. Interestingly, the META approach provides trends with mixed level of smoothness. On the one hand, results relative to Spain and UK show that the META provides trends that are very close to the AMB approach. This is not surprising since both filtering procedure estimate the signal-noise ratio. On the other hand, looking at the results for France and Germany, the META extracts trends that are very close to the standard practice of fixing the signal-noise ratio (this is especially evident in the sample 1994-2014). Finally, the results relative to Belgium, Italy, Netherlands and Portugal show that the META approach provides trends that, in terms of smoothness, are somehow in between the two univariate approach. %Therefore, despite the similarities with the AMB approach, our empirical results seem suggesting that the META extracts trends that are in general smoother than the ones extracted with the AMB approach. Surprisingly, the META  Clearly further empirical results are needed to .  %
Therefore, our empirical results show that estimating the signal-noise ratios using the META approach does not necessarily deliver different outputs compared with the standard practice of fixing these parameters apriori. We believe nevertheless that our procedure is more rigorous since it is based on a robust estimation method. On the other hand, imposing a predetermined single signal-noise ratio for all series represents a rather simplistic assumption. Indeed, the estimated values for the signal-noise ratios are quite far from being constant, in this example. On top of that, there is no general consensus on how to choose the signal-noise ratio (see for example the discussion in Ravn \& Uhlig (2002)).  

\section{Conclusions}
This paper provides closed-form results for the Hodrick-Prescott filter in the multivariate case. In addition, a simple and fast method is suggested to estimate the VARMA parameters of the implied structural process. As a consequence, the signal-noise matrix ratio can be quickly be estimated and used for filtering the underlying system of equations. Contrary to the standard maximum likelihood estimation, the main advantage of our method is that it is exempt from the numerical and convergence issues of high-dimensional minimization procedures. Indeed, the META estimation procedure uses only univariate model estimations as its computational core. Another significant advantage is that these scalar estimations are computationally independent and hence very suitable for parallel computation.

In general, we remark that our results, as well as the estimation procedure, are valid when it is assumed that the data generation process (\ref{eq:MT}) holds for the entire system of equations. Clearly this hypothesis might represent an issue in empirical analysis when departing from the assumption of a single dynamics for all the equations. Nevertheless, if the aim is simply extracting smooth-trends from multivariate time series, these results might simplify considerably the calculations.

\section{Proof of Theorem 1}
We first recall some definitions. For a sequence $g^{(k)}$, $k\in\mathbb{Z}$, if there is a $\psi>0$ such that $\norm{g^{(k)}}=O(p(k)\psi^k)$ for some polynomial $p(k)$ and for $k\to\infty$, we say that $g$ \emph{decays with rate $\psi$}. A \emph{causal linear process} of an uncorrelated time series $\{Y_t\}, Y_t\in\mathbb{R}^m$, is a time series of the form
\[
g_t:=\sum_{k=0}^{\infty} g^{(k)} Y_{t-k},
\]
where $g^{(k)}\in\mathbb{R}^{1\times m}$. Using the backshift operator, one can write alternatively $g_t=\sum_{k=1}^{\infty} g^{(k)}L^k Y_t$, and define the so-called \emph{transfer function} $g(L):=\sum_{k=1}^{\infty} g^{(k)}L^k$. If $\norm{g^{(k)}}$ decays with rate $\psi$, we say for shortness that the process $g_t$ decays with rate $\psi$. Often one can rewrite a transfer function $g(L)$ as a matrix fraction $g(L)=q(L)^{-1}p(L)$; in this case, the process decays with rate $\max \{\abs{z} \colon z\in\mathbb{C},\, q(z)=0\}$.

We start by proving a slight variation of Poloni \& Sbrana(2014), Lemma~10. Here the notation $\norm{X}_2$ denotes the $\mathbb{L}^2$ norm of random variables $\norm{X}_2= \E{X^2}^{1/2}$.
\begin{Lemma} \label{ourlemma}
 Let $(Y_i)_{i\in\mathbb{Z}}$ be a sequence of i.i.d. vector-valued random variables in $\mathbb{R}^{m}$ with mean $0$ and finite fourth moments, and let 
\begin{equation} \label{gpower}
 \begin{aligned}
  g_t &:= \sum_{k=0}^{\infty} g^{(k)} Y_{t-k},\\
  h_t &:= \sum_{k=0}^{\infty} h^{(k)} Y_{t-k},\\
\end{aligned}
\end{equation}
be two causal linear functions of $Y$, with $g^{(k)},h^{(k)}\in\mathbb{R}^{1\times m}$. Let moreover $\mathcal{F}_a^b$ be the $\sigma$-field generated by $Y_t$ for $a\leq t\leq b$.
Suppose that $g^{(k)}$ and $h^{(k)}$ decay with rate $\psi<1$. Then, $\xi_t=g_t h_t-\E{g_t h_t}$ is such that $\norm{\xi_0-\E{\xi_0 \mid \mathcal{F}^0_{-t}}}_2$ decays (as a function of $t$) with rate $\psi$, too.
\end{Lemma}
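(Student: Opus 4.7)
My plan is to exploit the product structure of $\xi_t=g_t h_t-\E{g_t h_t}$ together with the fact that $Y_s$ for $s<-t$ is independent of $\mathcal{F}^0_{-t}=\sigma(Y_{-t},\dots,Y_0)$. Fix $t\geq 0$ and split
\[
 g_0 = A+B,\quad A:=\sum_{k=0}^{t} g^{(k)} Y_{-k},\quad B:=\sum_{k=t+1}^{\infty} g^{(k)} Y_{-k},
\]
and analogously $h_0=C+D$ with $C$ collecting the first $t+1$ lags. By construction $A,C$ are $\mathcal{F}^0_{-t}$-measurable, while $B,D$ are independent of $\mathcal{F}^0_{-t}$ and have mean zero.

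Expanding $g_0 h_0 = AC + AD + BC + BD$ and taking the conditional expectation, the cross terms $AD$ and $BC$ vanish (in each, one factor is measurable and the other independent with mean zero), while $BD$ collapses to its unconditional mean. Hence
\[
 \E{\xi_0 \mid \mathcal{F}^0_{-t}} = AC + \E{BD} - \E{g_0 h_0},
\]
which yields the clean decomposition
\[
 \xi_0 - \E{\xi_0\mid \mathcal{F}^0_{-t}} = AD + BC + \bigl(BD-\E{BD}\bigr).
\]
I would then bound each of these three terms in $\mathbb{L}^2$ by Cauchy--Schwarz in the form $\norm{XY}_2\leq\norm{X}_4\norm{Y}_4$, using $\norm{BD-\E{BD}}_2\leq\norm{BD}_2$ for the last one.

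It remains to control the $\mathbb{L}^4$ norms of $A,B,C,D$. Since the $Y_{-k}$'s are i.i.d., mean zero, with finite fourth moments, Rosenthal's inequality for sums of independent zero-mean random variables yields, for any coefficient sequence $c^{(k)}\in\mathbb{R}^{1\times m}$,
\[
 \E{\left(\sum_k c^{(k)} Y_{-k}\right)^{4}} \leq C \left[\left(\sum_k \norm{c^{(k)}}^2\right)^{2} + \sum_k \norm{c^{(k)}}^4\right],
\]
where $C$ depends only on the second and fourth moments of $Y_0$. Applied to $g_0$ and $h_0$ themselves, this bounds $\norm{A}_4,\norm{C}_4$ uniformly in $t$. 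Applied to the tails, the hypothesis $\norm{g^{(k)}},\norm{h^{(k)}}=O(p(k)\psi^k)$ gives $\sum_{k>t}\norm{g^{(k)}}^2=O(\tilde p(t)\psi^{2t})$ and $\sum_{k>t}\norm{g^{(k)}}^4=O(\tilde p(t)\psi^{4t})$, so that $\norm{B}_4,\norm{D}_4=O(q(t)\psi^t)$ for some polynomial $q$. Combining, the three terms each have $\mathbb{L}^2$ norm $O(q(t)\psi^t)$ (the third one is in fact $O(\psi^{2t})$ up to polynomial factors), which is exactly decay at rate $\psi$.

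The main technical obstacle is the $\mathbb{L}^4$ step: one must invoke a Rosenthal-type moment inequality so that polynomial-times-exponential decay of the coefficient sequences transfers to polynomial-times-exponential decay of the $\mathbb{L}^4$ norms of the tails, uniformly in $t$. Once that ingredient is in place, everything else is algebraic rearrangement of the four-term expansion of $g_0 h_0$ plus two applications of Cauchy--Schwarz, as sketched above.
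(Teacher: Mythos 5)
Your proof is correct and follows essentially the same route as the paper's: the identical split of $g_0$ and $h_0$ into an $\mathcal{F}^0_{-t}$-measurable head and an independent tail, the same cancellation of the cross terms under the conditional expectation, and the same resulting three-term bound. The only difference is bookkeeping: where the paper bounds the tail-times-tail term by $2\norm{q_t}_2\norm{s_t}_2$ (a step that, since those two factors are \emph{not} independent, strictly requires the $\mathbb{L}^4$/Cauchy--Schwarz argument you spell out), you make the fourth-moment hypothesis do its work explicitly via Rosenthal's inequality, which if anything is the more careful version of the same argument.
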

\begin{proof}
 We follow the proof in Poloni \& Sbrana(2014), generalizing it to this case. For a fixed $t>0$, we may write
\begin{equation} \label{pqdec}
  g_0 = \underbrace{\sum_{k=0}^{t} g^{(k)} Y_{-k}}_{:=p_t}+ \underbrace{\sum_{k>t} g^{(k)} Y_{-k}}_{:=q_t},
\end{equation}
where $p_t$ is a function in the $\sigma$-field $\mathcal{F}^0_{-t}$ and $q_t$ is independent from it, and similarly
\begin{equation} \label{rsdec}
 h_0 = \underbrace{\sum_{k=0}^{t} h^{(k)} Y_{-k}}_{:=r_t}+ \underbrace{\sum_{k>t} h^{(k)} Y_{-k}}_{:=s_t}.
\end{equation}
One has
\begin{multline*}
 \E{g_0h_0 \mid \mathcal{F}^0_{-t}} = \E{(p_t+q_t)(r_t+s_t)\mid \mathcal{F}^0_{-t}} \\= p_tr_t+\underbrace{\E{q_t \mid \mathcal{F}^0_{-t}}}_{=0} r_t + p_t \underbrace{\E{s_t \mid \mathcal{F}^0_{-t}}}_{=0} + \E{q_ts_t \mid \mathcal{F}^0_{-t}} = p_tr_t+ \E{q_ts_t \mid \mathcal{F}^0_{-t}},
\end{multline*}
thus
\begin{multline*}
 \norm*{\xi_0 - \E{\xi_0 \mid \mathcal{F}^0_{-t}}}_2 = \norm*{g_0 h_0 - \E{g_0 h_0 \mid \mathcal{F}^0_{-t}}}_2
 =\norm{q_tr_t+p_ts_t+q_ts_t - \E{q_ts_t \mid \mathcal{F}^0_{-t}}}_2 \\
 \leq \norm{q_t}_2 \norm{r_t}_2+ \norm{p_t}_2 \norm{s_t}_2+2\norm{q_t}_2 \norm{s_t}_2.
\end{multline*}
Since the decompositions \eqref{pqdec}, \eqref{rsdec} are into independent (orthogonal) components, one can estimate
\begin{align*}
 \norm{p_t}_2 &\leq \norm{g_0}_2, & \norm{q_t}_2 &=O\left( \sum_{k>t} \norm{g^{(k)}}_2\right) = O(\psi^t + \psi^{t+1}+\psi^{t+2}+\dots)=O(\psi^t),\\
 \norm{r_t}_2 &\leq \norm{h_0}_2, & \norm{s_t}_2 &= O\left( \sum_{k>t} \norm{h^{(k)}}_2 \right) = O(\psi^t)
\end{align*}
(since $\sum_{k\geq 0} \psi^{t+k} = \frac{\psi^t}{1-\psi}=O(\psi^t)$).

Hence $\norm*{\xi_0 - \E{\xi_0 \mid \mathcal{F}^0_{-t}}}_2 = O(\psi^t)$.
\end{proof}

We are now ready to consider the asymptotic properties of the maximum likelihood estimator for the scalar processes $x^{(w)}_t$. We recall that they follow a a scalar MA(2) of the form
\begin{equation} \label{arminow}
x^{(w)}_{t}=v^{(w)}_t + \theta_1^{(w)} v^{(w)}_{t-1} - \frac{\theta_1^{(w)}}{4+\theta_1^{(w)}} v^{(w)}_{t-2}, \quad \E{(v_t^{(w)})^2}=\omega,
\end{equation}
where $v^{(w)}_t$ is an uncorrelated white noise sequence (but, for $w_1 \neq w_2$, $v^{(w_1)}_t$ and $v^{(w_2)}_t$ will in general be correlated).

First of all, we prove that these MA processes admit an invertible representation (cfr.~Poloni and Sbrana(2014), Lemma~6).
\begin{Lemma}
 Let $w\in\mathbb{R}^{1\times m}$ be given, with $w\neq 0$, and let $x^{(w)}_t=w'y_t$ be the aggregate process of a process $y_t\in\mathbb{R}^m$ with ACGF $\Gamma(L)$ such that $\det\Gamma(z)\neq 0$ for each $z$ on the unit circle. Then, the ACGF $\gamma(L)=w' \Gamma(L) w$ has also no zeros on the unit circle and hence $x^{(w)}_t$ has an invertible representation.
\end{Lemma}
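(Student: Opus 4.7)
My plan rests on identifying $\Gamma(z)$, for $\abs{z}=1$, with (a multiple of) the spectral density matrix of $y_t$. First I would observe that for a real multivariate stationary process one has $\Gamma_{-k} = \Gamma_k'$, so that for $z = e^{i\lambda}$ on the unit circle one has $z^{-k} = \overline{z^k}$ and hence $\Gamma(z)^* = \Gamma(z)$; that is, $\Gamma(z)$ is a Hermitian matrix. The Herglotz (or Bochner) theorem, applied to the positive semidefinite autocovariance sequence of the $\mathbb{R}^m$-valued process $y_t$, then yields the stronger fact that $\Gamma(z)$ is Hermitian positive semidefinite for every $z$ on the unit circle; this is just the standard statement that the spectral density matrix of a stationary process is Hermitian positive semidefinite at every frequency.

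Second, I would use the hypothesis $\det \Gamma(z)\neq 0$ for $\abs{z}=1$ to upgrade semidefiniteness to strict positive definiteness. For any nonzero real vector $w$ and any $z$ on the unit circle, $\gamma(z) = w'\Gamma(z)w$ is then a strictly positive real number: it is real because $\Gamma(z)$ is Hermitian and $w$ is real, and it is strictly positive because $\Gamma(z)$ is positive definite and $w\neq 0$. In particular, $\gamma(z)$ has no zeros on the unit circle, which is the first conclusion of the lemma.

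Third, since $x^{(w)}_t = w'y_t$ inherits a finite-order MA structure from $y_t$, its ACGF $\gamma(L)$ is a scalar Laurent polynomial, which we have just shown is nonnegative and nowhere-vanishing on $\abs{z}=1$. The classical scalar spectral (canonical) factorization then produces $\gamma(L)=\theta(L)\,\omega\,\theta(L^{-1})$ with all zeros of $\theta(L)$ strictly outside the unit circle; equivalently, $x^{(w)}_t$ admits an invertible MA representation. This is just the scalar specialization of the matrix factorization theorem of Gohberg, Lancaster and Rodman cited after \eqref{canfact} in Section 2.

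The main potential obstacle is the invocation of Hermitian positive semidefiniteness of $\Gamma(z)$ on the unit circle, but this is a textbook property of spectral densities of real stationary processes and requires no new argument. Once positive semidefiniteness is in hand, the invertibility hypothesis immediately rules out unit-circle zeros of $\gamma$, and scalar canonical factorization concludes the argument in one line.
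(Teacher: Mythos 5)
Your argument is correct and follows essentially the same route as the paper: both proofs rest on the observation that $\Gamma(z)$ is Hermitian positive semidefinite on the unit circle, so that nonsingularity forces $w'\Gamma(z)w>0$ for real $w\neq 0$. The only cosmetic difference is that you justify positive semidefiniteness via Herglotz/Bochner, whereas the paper reads it off directly from the factorization \eqref{canfact}, and you spell out the final scalar canonical-factorization step that the paper leaves implicit.
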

\begin{proof}
 For $z$ on the unit circle, $\Gamma(z)$ is a Hermitian positive semidefinite matrix (thanks to \eqref{canfact}). So $w'\Gamma(z)w=0$ can hold if and only if $\Gamma(z)w=0$, i.e., if $\det\Gamma(z)=0$, which is against our assumption.
\end{proof}
Therefore we may safely assume that the polynomials $\theta(L)=1+\theta^{(w)}_1 L -\frac{\theta_1^{(w)}}{4+\theta_1^{(w)}} L^2$ have all their roots outside the unit circle.

The corresponding negative log-likelihood function is
\[
 L(\theta_1,\omega)=\frac{1}{N}\sum_{t=1}^N \ell^{(w)}_t(\theta_1,\omega), \quad \ell^{(w)}_t(\theta_1,\omega)=\frac{1}{2}\log \omega + \frac{v_t^2}{2\omega},
\]
where the sequence $v_t=v_t(x^{(w)},\theta_1)$ is generated by $v_t=x^{(w)}_t-\theta_1 v_{t-1}+ \frac{\theta_1}{4+\theta_1}v_{t-2}$. As in Poloni \& Sbrana(2014), we may ignore the issue of the initial data and set $v_0=v_{-1}=0$ (\emph{quasi-likelihood}). Indeed, as in many similar models, the influence of the choice of $v_0$ and $v_{-1}$ on the estimator is negligible asymptotically (cfr.~Box \& Jenkins(1976), Section 7.1.3).

First of all, we check that the exact values of the parameters $\theta_1^{(w)},\omega^{(w)}$ correspond to an isolated maximum of $\E{\ell^{(w)}_t(\theta_1,\omega)}$. Notice first that $v_t^{(w)}=v_t(x^{(w)},\theta_1^{(w)})$.

Define for shortness $v'_t:=\frac{\partial v_t}{\partial \theta_1}, v''_t:=\frac{\partial^2 v_t}{\partial \theta_1^2}$; $v'_t$ obeys the recursion rule
\begin{equation} \label{vprime_recurr}
 v'_t=-\theta_1 v'_{t-1} +\frac{\theta_1}{4+\theta_1} v'_{t-2} -v_{t-1} + \frac{4}{(4+\theta_1)^2}  v_{t-2},
\end{equation}
obtained by differentiating the definition of $v_t$. In particular $v'_t$ is a linear function of $v_{t-1},v_{t-2},\dots$. This implies that $v'_t(\theta_1^{(w)})$ is uncorrelated with $v_t(\theta_1^{(w)})=v_t^{(w)}$. The same holds for $v''_t(\theta_1^{(w)})$. 
Using these properties, one can evaluate
\begin{equation} \label{gradient}
 \nabla \E{\ell_t^{(w)}} \biggr\vert_{\theta_1^{(w)},\omega^{(w)}} = \E{\m{\frac{v_t v_t'}{\omega} \\ \frac{1}{2\omega}-\frac{v_t^2}{2\omega^2}}}\biggr\vert_{\theta_1^{(w)},\omega^{(w)}} = 0,
\end{equation}
\[
 \nabla\nabla' \E{\ell_t^{(w)}}\biggr\vert_{\theta_1^{(w)},\omega^{(w)}} 
  =\m{
  \E{\frac1\omega \left( {v'_t}^2 + v''_t v_t \right)} &
  \E{-\frac1{\omega^2} v'_t v_t}\\
  \E{-\frac1{\omega^2} v'_t v_t} &
  \E{\frac1{2\omega^2}\left(\frac{2v_t^2}{\omega}-1\right)}
 }\biggr\vert_{\theta_1^{(w)},\omega^{(w)}} 
 = \m{\E{\frac{(v'_t(\theta_1^{(w)}))^2}{\omega}} & 0 \\ 0 & \frac{1}{2(\omega^{(w)})^2}}.
\]

The matrix $\nabla\nabla' \E{\ell_t^{(w)}}\biggr\vert_{\theta_1^{(w)},\omega^{(w)}}$ is nonsingular unless $v'_t(\theta_1^{(w)})$ is zero a.s.; this cannot happen, otherwise from \eqref{vprime_recurr} we would obtain a nontrivial relation among the $v^{(w)}_t$ at different $t$'s, but since $v^{(w)}$ is a white noise process with variance $\omega^{(w)}>0$ this is impossible. Indeed, with some transfer function machinery, one can evaluate $\E{v'_t(\theta_1^{(w)})^2}$ exactly in terms of the system parameters (see Lemma~\ref{norm}), but here it is enough to prove that it is nonzero.

Notice moreover that the Hessian is positive definite; hence the negative log-likelihood has a local minimum, and the likelihood has a local maximum.

\begin{Lemma} \label{cons}
 Consider the constrained maximum-likelihood estimator $(\tilde{\theta}_1^{(w)},\tilde{\omega}^{(w)})=\arg\max L(\theta_1,\omega)$ for a process $x^{(w)}_t$. This estimator is asymptotically consistent.
\end{Lemma}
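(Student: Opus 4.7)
The plan is to follow the standard blueprint for consistency of an extremum estimator: (i) restrict the parameter space to a compact set $K$ that contains $(\theta_1^{(w)},\omega^{(w)})$ in its interior and is contained strictly in the invertibility region (so that the roots of $1+\theta_1 L-\frac{\theta_1}{4+\theta_1}L^2$ have modulus bounded away from $1$ uniformly on $K$, and $\omega\geq \omega_{\min}>0$); (ii) prove uniform almost-sure convergence
\[
\sup_{(\theta_1,\omega)\in K}\abs{L(\theta_1,\omega)-L_\infty(\theta_1,\omega)}\longrightarrow 0, \qquad L_\infty(\theta_1,\omega):=\E{\ell_t^{(w)}(\theta_1,\omega)};
\]
and (iii) identify $(\theta_1^{(w)},\omega^{(w)})$ as the unique global minimizer of $L_\infty$ on $K$. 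Consistency of the argmax then follows from the standard continuity-of-argmin argument.

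For step (ii), the recursion $v_t=x^{(w)}_t-\theta_1 v_{t-1}+\frac{\theta_1}{4+\theta_1}v_{t-2}$ realizes $v_t(\theta_1)$ as a causal linear filter of $x^{(w)}_t$ whose impulse response decays geometrically at a rate $\psi<1$ uniformly on $K$. Substituting the Wold representation of $x^{(w)}_t$ in terms of the i.i.d.\ vector noise $\eta_t$ driving $z_t$, one expresses $v_t(\theta_1)=\sum_{k\geq 0}g^{(k)}(\theta_1)\eta_{t-k}$ with the same uniform geometric decay. Lemma~\ref{ourlemma} applied to $g=h=v_t(\theta_1)$ then shows that $\ell_t^{(w)}(\theta_1,\omega)$ is near-epoch dependent on $\{\eta_t\}$ in $\mathbb{L}^2$ at rate $\psi$, uniformly on $K$. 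Combined with a uniform Lipschitz bound on $\ell_t^{(w)}$ in $(\theta_1,\omega)$ (inherited from the geometric decay of $\partial g^{(k)}/\partial\theta_1$ on $K$), the generic uniform ergodic theorem of Ling \& McAleer(2010) yields the required uniform convergence. The quasi-likelihood initialization $v_{-1}=v_0=0$ changes $\ell_t^{(w)}$ by $O(\psi^t)$, whose Cesàro mean is $O(N^{-1})$ and hence negligible.

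For step (iii), I would first profile out $\omega$: the first-order condition in $\omega$ gives $\omega^*(\theta_1)=\E{v_t(\theta_1)^2}$ and reduces $L_\infty$ to $\frac12(\log\E{v_t(\theta_1)^2}+1)$, so it suffices to show that $\theta_1\mapsto\E{v_t(\theta_1)^2}$ has a unique global minimum at $\theta_1=\theta_1^{(w)}$ on the projection of $K$. By Parseval's identity,
\[
\E{v_t(\theta_1)^2}=\frac{1}{2\pi}\int_{-\pi}^{\pi}\frac{f_{x^{(w)}}(\lambda)}{\abs{\theta(e^{i\lambda};\theta_1)}^2}\,d\lambda, \qquad \theta(z;\theta_1):=1+\theta_1 z-\frac{\theta_1}{4+\theta_1}z^2,
\]
and since $f_{x^{(w)}}(\lambda)\propto\abs{\theta(e^{i\lambda};\theta_1^{(w)})}^2$, this integral equals $\omega^{(w)}$ at $\theta_1=\theta_1^{(w)}$ and is strictly larger elsewhere inside the invertibility region, by the classical identifiability of scalar invertible MA representations. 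The positive-definite Hessian computed just before the statement confirms that this critical point is a strict local minimum.

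The main obstacle is the globalization underlying step (iii): the quartic in $\theta_1$ from Proposition~\ref{scalarcase} has four roots solving the same moment equations, and one must ensure $K$ excludes the three non-invertible ones so that the profile objective has a unique global minimum on $K$. The preliminary lemma showing that the aggregated ACGF $\gamma^{(w)}(L)$ has no zeros on the unit circle guarantees that $\theta_1^{(w)}$ lies strictly inside the invertibility region, so one can choose $K$ as a compact neighborhood of $(\theta_1^{(w)},\omega^{(w)})$ strictly contained in that region; on such a $K$ the identifiability argument above applies cleanly. Everything else is a routine assembly of the uniform ergodic theorem and the identifiability fact.
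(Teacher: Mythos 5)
Your proposal is correct and follows essentially the same route as the paper, whose own proof simply restricts the parameter space to a suitable compact neighborhood of the true parameter and invokes Ling \& McAleer (2010), Theorem~1, together with the computation (carried out just before the lemma) showing that $\E{\ell_t^{(w)}}$ has an isolated extremum with nonsingular Hessian at $(\theta_1^{(w)},\omega^{(w)})$. Your write-up supplies the details the paper delegates to that reference --- the uniform law of large numbers via the decay estimates of Lemma~\ref{ourlemma} and an explicit Parseval/spectral identification of $\theta_1^{(w)}$ in place of the paper's purely local Hessian argument --- and these additions are sound.
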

\begin{proof}
 This follows from standard maximum-likelihood theory, for instance from Ling \& McAleer(2010), Theorem~1. We have proved above that $\E{\ell_t^{(w)}}$ has a local maximum in $(\tilde{\theta}_1^{(w)},\tilde{\omega}^{(w)})$, so using a suitable neighborhood of this point as the parameter space $\Theta$, the assumptions there are satisfied.
\end{proof}
\begin{Lemma} \label{norm}
 The estimator 
 \[
\tilde{\beta}=(\tilde{\theta}_1^{(w_1)}, \tilde{\omega}^{(w_1)}, \tilde{\theta}_1^{(w_2)}, \tilde{\omega}^{(w_2)}, \dots, \tilde{\theta}_1^{(w_{\frac{d(d+1)}{2}})}, \tilde{\omega}^{(w_{\frac{d(d+1)}{2}})})'  
 \]
is asymptotically normal, i.e., $\sqrt{N}(\tilde{\beta}-\beta) \to N(0,\Xi)$ in law. 
\begin{proof}
 Once again we follow Poloni \& Sbrana(2014), Theorem~11. In view of their proof, it is enough to prove that the functions
\begin{equation} \label{todecay}
 \norm*{\frac{\partial}{\partial \theta_1} \ell_t^{(w)} - \E{\frac{\partial}{\partial \theta_1} \ell_t^{(w)}  \mid \mathcal{F}^0_{-t}}}_2, \quad
 \norm*{\frac{\partial}{\partial \omega} \ell_t^{(w)} - \E{\frac{\partial}{\partial \omega} \ell_t^{(w)}  \mid \mathcal{F}^0_{-t}}}_2
\end{equation}
(where the $\sigma$-fields $\mathcal{F}$ are costructed starting on $Y_t=\m{\epsilon_t\\\xi_t}$) decay with rate $\psi<1$ for each $w\in\mathcal{W}$.
We recall that (cfr. \eqref{gradient})
\begin{equation} \label{gradients}
 \frac{\partial \ell_t^{(w)}}{\partial \theta_1} = \frac{ v_t(x^{(w)},\theta_1) v_t'(x^{(w)},\theta_1) }{\omega^{(w)}}, \quad
 \frac{\partial \ell_t^{(w)}}{\partial \omega} = \frac{1}{2(\omega^{(w)})^2}\left( \omega^{(w)} - v_t(x^{(w)},\theta_1)^2 \right).
\end{equation}
The process $v_t(x^{(w)},\theta_1)$ is a linear process in $Y_t$. Using backshift operator formalism, we have
\begin{align*}
 z_t &= \xi_{t-2} + \epsilon_t -2\epsilon_{t-1}+\epsilon_{t-2} = \m{I-2L+L^2 & L^2}\m{\epsilon_t\\ \xi_t}, \\
 x^{(w)}_t &= w_t' z_t = w_t'\m{I-2L+L^2 & L^2}\m{\epsilon_t\\ \xi_t},\\
 v_t(x^{(w)},\theta_1) &= H(L)^{-1} x^{(w)}_t = H(L)^{-1}w_t'\m{I-2L+L^2 & L^2}\m{\epsilon_t\\ \xi_t},
\end{align*}
with $H(L)=1-\theta_1 L -\frac{\theta_1}{4+\theta_1} L^2$. This rational expression can be turned into a power series in $L$ as in \eqref{gpower}, with coefficients $g^{(k)}$ decaying as the maximum modulus of the roots of $H(L)$, which are all smaller than $1$ by the invertibility assumption.

Similarly, one can evaluate
\[
 v'^{(w)}_t = \frac{\partial}{\partial \theta_1} H(L)^{-1}w_t'\m{I-2L+L^2 & L^2}\m{\epsilon_t\\ \xi_t} = -\frac{H'(L)}{H(L)^2} w_t'\m{I-2L+L^2 & L^2}\m{\epsilon_t\\ \xi_t},
\]
with $H'(L)=\frac{\partial H(L)}{\partial \theta_1}= -1-\frac{4}{(4+\theta_1)^2}L$. Again, one turns this expression into a power series in $L$ and obtains that the decay rate in the coefficients is exponential with rate given by the roots of $H(L)$. Hence $v'^{(w)}_t$ is also a linear function of $Y_t$ with exponentially decaying coefficients. Now we simply apply Lemma~\ref{ourlemma} to get the required decay properties for \eqref{todecay}.

\end{proof}
\end{Lemma}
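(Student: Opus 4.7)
The plan is to follow the standard quasi-MLE asymptotic normality argument, treating jointly the $\frac{d(d+1)}{2}$ scalar estimators which share a common innovation process $Y_t=\m{\epsilon_t\\\xi_t}$. Since each block is consistent by Lemma~\ref{cons}, and hence so is the stacked $\tilde{\beta}$, one starts from the individual first-order conditions $\nabla L^{(w)}(\tilde{\theta}_1^{(w)},\tilde{\omega}^{(w)})=0$ and performs mean-value Taylor expansions around the true parameters. Stacking these gives
\[
 \sqrt{N}(\tilde{\beta}-\beta) = -\bigl[\diag_w H^{(w)}(\bar{\beta}^{(w)})\bigr]^{-1}\sqrt{N}\,G_N,
\]
where $G_N$ is the vertical concatenation of $\nabla L^{(w)}$ at the true parameters and $H^{(w)}$ denotes the $2\times 2$ Hessian block. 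The proof splits into (i) convergence of each Hessian block to a deterministic nonsingular limit, and (ii) a joint CLT $\sqrt{N}\,G_N\to N(0,V)$.

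For (i), each empirical block is the sample average of smooth functions of the stationary ergodic causal linear processes $v_t^{(w)}$ and $v_t'^{(w)}$. The ergodic theorem, combined with continuity of the Hessian in the parameter and the consistency of $\tilde{\beta}$, yields a.s.\ convergence of $H^{(w)}(\bar{\beta}^{(w)})$ to $\m{\E{(v_t'^{(w)})^2/\omega^{(w)}} & 0 \\ 0 & 1/(2(\omega^{(w)})^2)}$, which is diagonal and positive definite by the calculation displayed just above Lemma~\ref{cons}.

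Step (ii) is the substantive one. The coordinates of $G_N$ across different $w$'s are in general cross-correlated, since all aggregates $x_t^{(w)}=w'z_t$ are driven by the same innovations $Y_t$; in particular $G_N$ is not a martingale average and a martingale CLT is not directly available. I would invoke the CLT for $\mathbb{L}^2$-approximable stationary sequences used in Poloni \& Sbrana(2014, Theorem~11), which rests on Ling \& McAleer(2010). Its single structural hypothesis is that each coordinate $\xi_t$ of the centered gradient satisfies
\[
 \norm*{\xi_0-\E{\xi_0\mid\mathcal{F}_{-t}^0}}_2 = O(\psi^t), \qquad \psi<1.
\]
From \eqref{gradient}, every such coordinate is a constant plus a scalar multiple of $g_t h_t$ with $g_t,h_t\in\{v_t^{(w)},v_t'^{(w)}\}$. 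Both $v_t^{(w)}$ and $v_t'^{(w)}$ are obtained by applying the rational filters $H(L)^{-1}$ and $-H'(L)/H(L)^2$ to a fixed linear combination of $Y_t$, and invertibility of $H$ (established in the invertibility lemma preceding Lemma~\ref{cons}) forces exponentially decaying impulse responses. Lemma~\ref{ourlemma} then delivers exactly the required exponential decay of the conditional $\mathbb{L}^2$ distance; the fourth-moment hypothesis on $Y_t$ enters precisely to ensure $\xi_t\in\mathbb{L}^2$ via Cauchy--Schwarz on $\E{g_t^2 h_t^2}$.

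The main obstacle is the joint rather than marginal nature of the CLT: one must verify that the approximability rate $\psi$ can be chosen uniformly over the finite set $\mathcal{W}$ (immediate by taking a maximum) and identify the limiting covariance as the long-run variance $V=\sum_{k\in\mathbb{Z}}\E{\nabla\ell_0(\beta)\,\nabla\ell_k(\beta)'}$ of the stacked gradient, whose absolute summability again follows from the exponential decay above. Once $\sqrt{N}\,G_N\to N(0,V)$ is established, Slutsky combined with (i) yields $\sqrt{N}(\tilde{\beta}-\beta)\to N(0,\Xi)$ with $\Xi=\diag_w(H^{(w)})^{-1}\,V\,\diag_w(H^{(w)})^{-1}$, which is the stated claim.
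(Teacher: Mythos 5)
Your proposal is correct and follows essentially the same route as the paper: both reduce the claim to verifying the $\mathbb{L}^2$-approximability condition $\norm{\xi_0-\E{\xi_0\mid\mathcal{F}^0_{-t}}}_2=O(\psi^t)$ for the gradient coordinates, by writing $v_t^{(w)}$ and $v_t'^{(w)}$ as causal linear processes of $Y_t$ through the rational filters $H(L)^{-1}$ and $-H'(L)/H(L)^2$ (with exponential decay from invertibility) and then invoking Lemma~\ref{ourlemma}, with the rest delegated to Poloni \& Sbrana (2014), Theorem~11. The additional detail you supply --- the mean-value expansion, the ergodic convergence of the Hessian blocks, and the sandwich form of $\Xi$ --- is exactly what that cited theorem encapsulates, so it is a faithful unpacking rather than a different argument.
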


Having proved Lemmas~\ref{cons} and~\ref{norm}, we can conclude as in Poloni \& Sbrana(2014), Theorem~12 to prove Theorem~\ref{METAprop}.

\begin{figure}
\caption{BELGIUM: Industrial production and associated smooth-trends}
\vspace{-1cm}
\begin{center}
\begin{minipage}[c]{1.0\linewidth}
\includegraphics[angle=0,width=\linewidth]{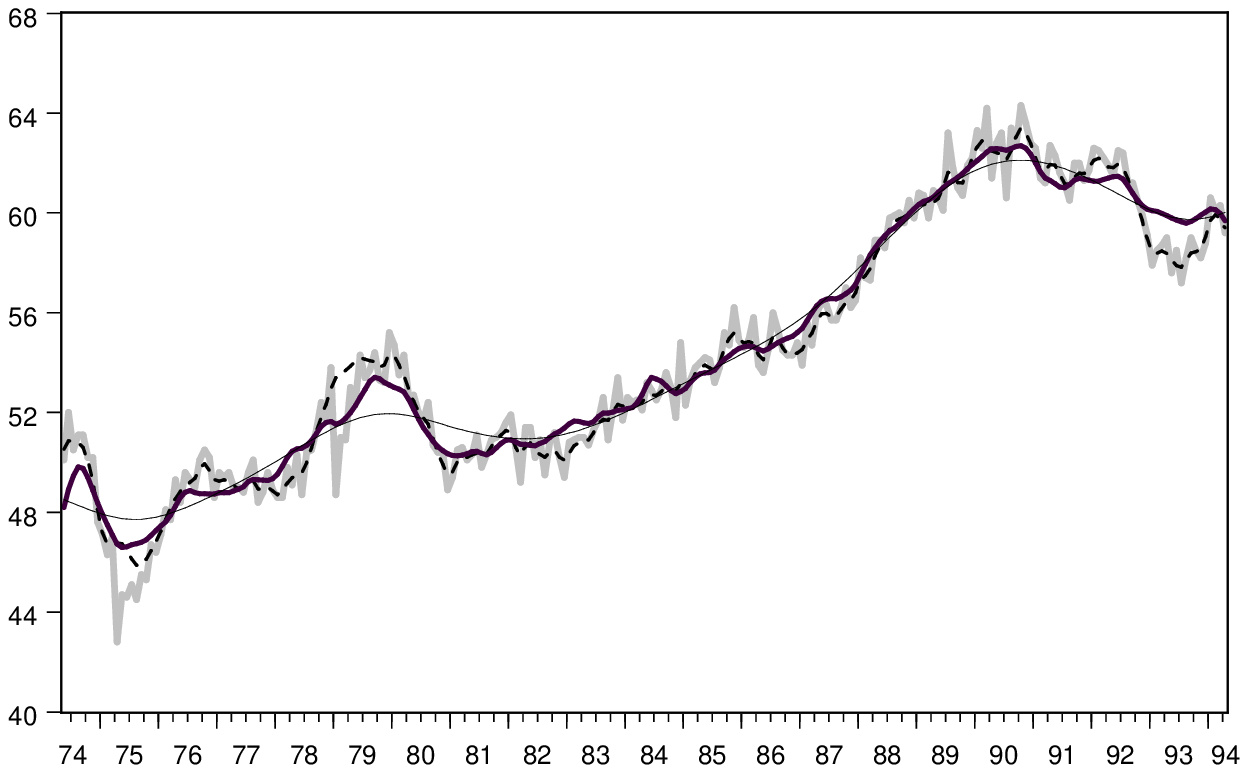} 
\end{minipage}
%\hspace{0.010cm}
%\vspace{0.010cm}
%\end{center}
%\footnotesize{The chart shows the trends extracted with the HP (solid line) and with META (dotted line)} 
%\end{figure}
\vspace{-1.2cm}

%\begin{figure}[!ht]
%\caption{Smooth-trends (HP filter vs. META)}
%\begin{center}
\begin{minipage}[c]{1.0\linewidth}
\includegraphics[angle=0,width=\linewidth]{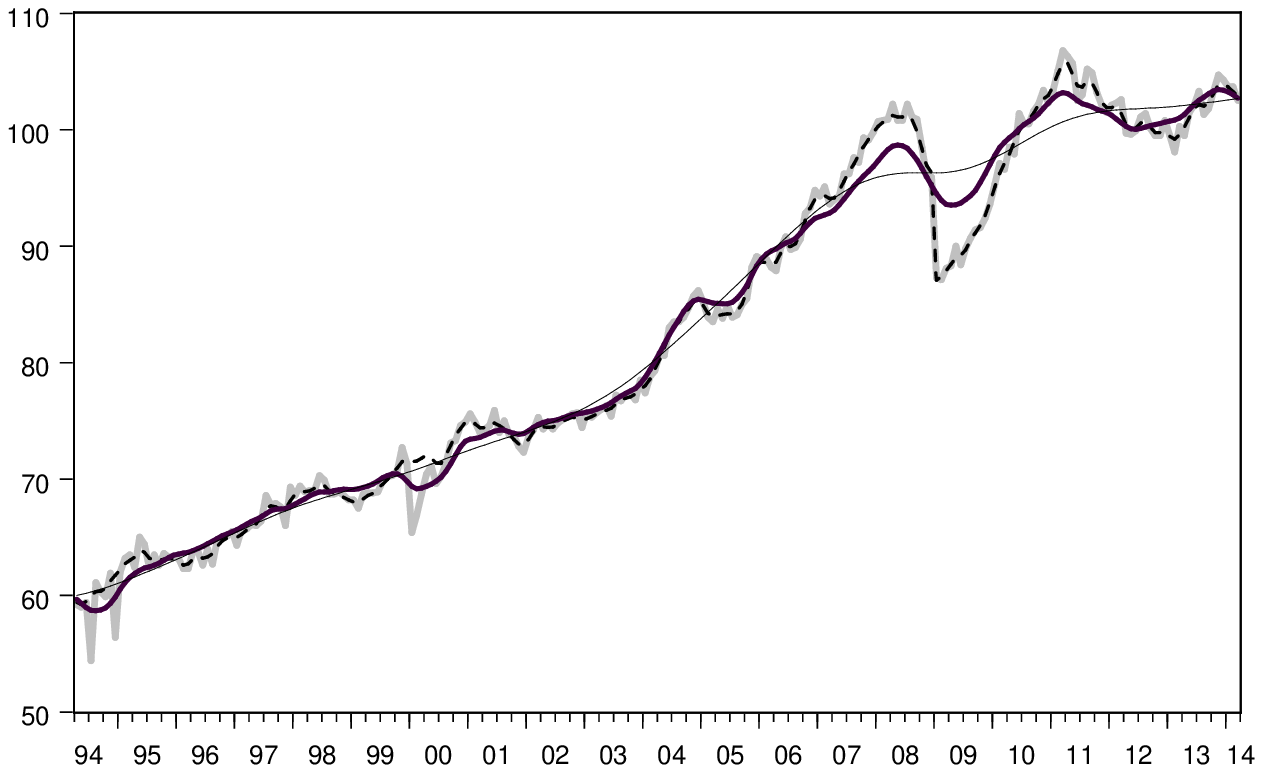} 
\footnotesize{The chart shows the industrial production series (in grey) together with the trends extracted with the standard HP filter (tiny black line), with the META approach (thick black line) and with the AMB approach (dotted line). The first chart refers to the sample 1974-1994, while the second one refers to the sample 1994-2014} 
\end{minipage}
%\hspace{0.010cm}
%\vspace{0.010cm}

\end{center}

\end{figure}
\newpage

\begin{figure}
\caption{FRANCE: Industrial production and associated smooth-trends}
\vspace{-1cm}
\begin{center}
\begin{minipage}[c]{1.0\linewidth}
\includegraphics[angle=0,width=\linewidth]{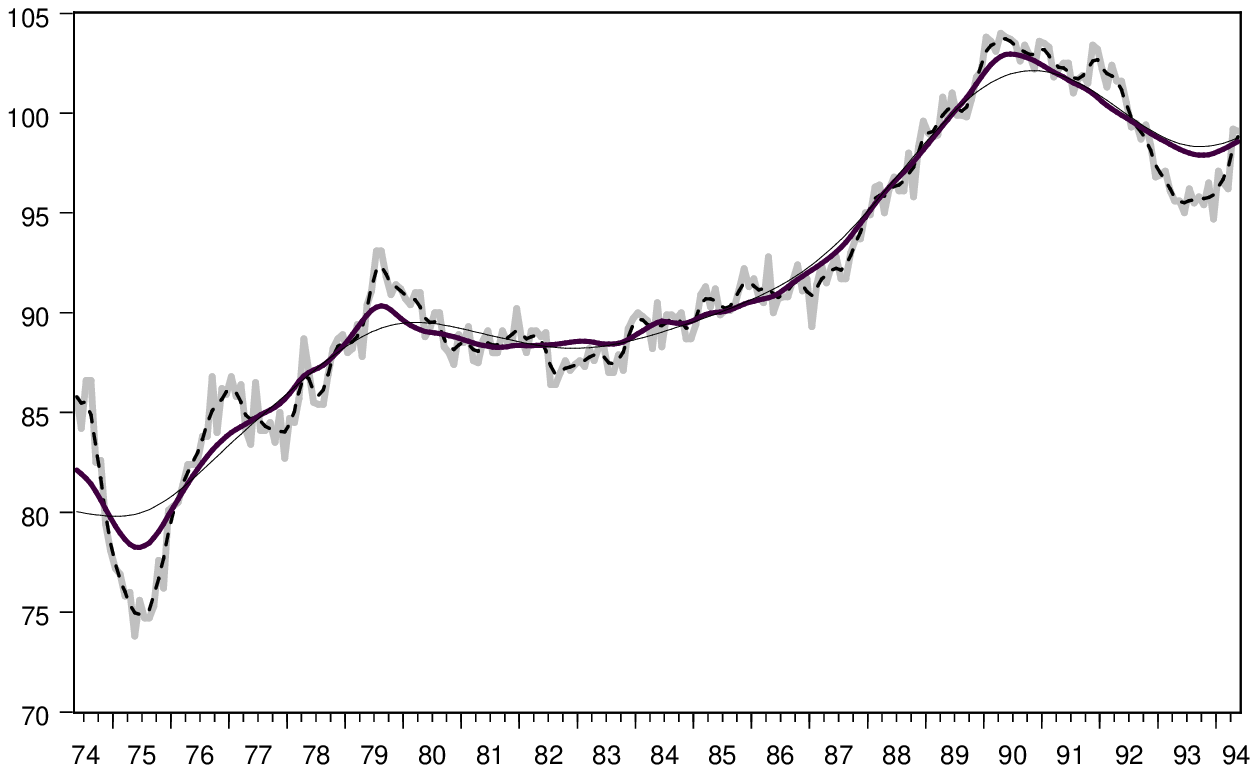} 
\end{minipage}
%\hspace{0.010cm}
%\vspace{0.010cm}
%\end{center}
%\footnotesize{The chart shows the trends extracted with the HP (solid line) and with META (dotted line)} 
%\end{figure}
\vspace{-1.2cm}

%\begin{figure}[!ht]
%\caption{Smooth-trends (HP filter vs. META)}
%\begin{center}
\begin{minipage}[c]{1.0\linewidth}
\includegraphics[angle=0,width=\linewidth]{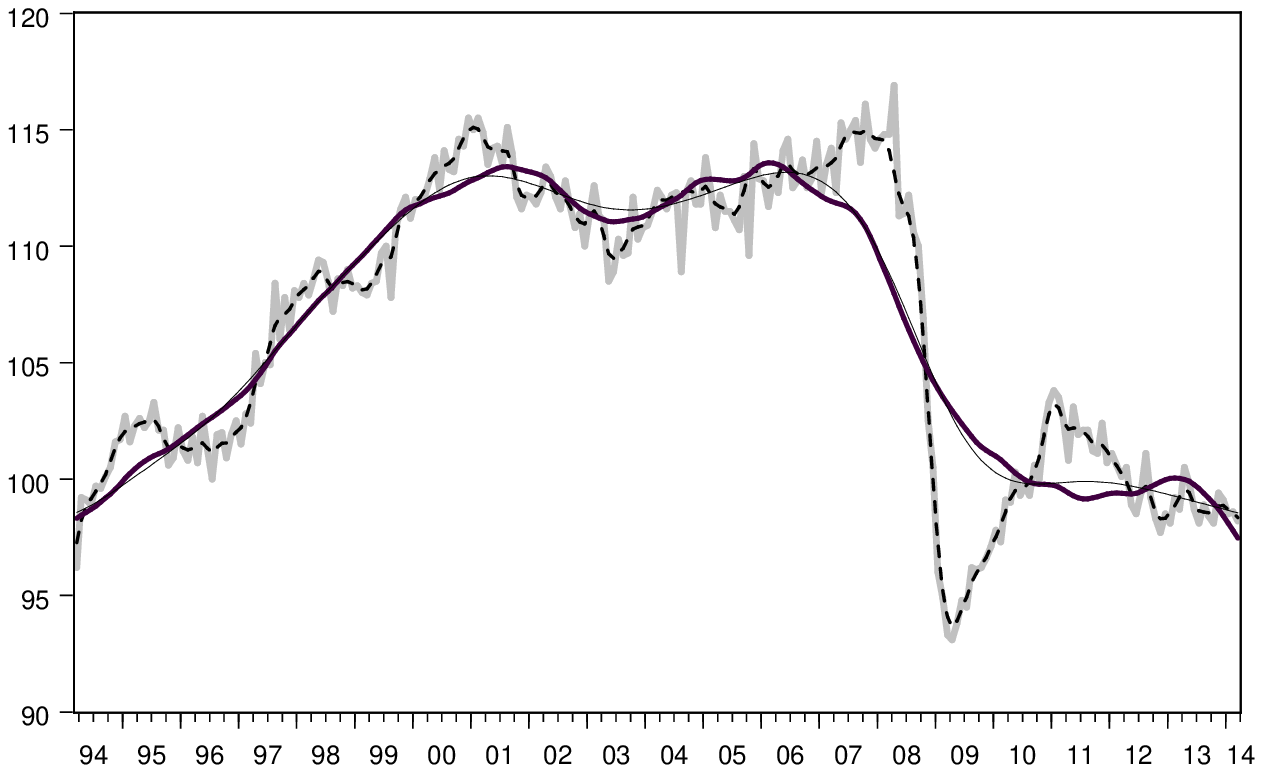} 
\footnotesize{The chart shows the industrial production series (in grey) together with the trends extracted with the standard HP filter (tiny black line), with the META approach (thick black line) and with the AMB approach (dotted line). The first chart refers to the sample 1974-1994, while the second one refers to the sample 1994-2014} 
\end{minipage}
%\hspace{0.010cm}
%\vspace{0.010cm}

\end{center}

\end{figure}
\newpage

\begin{figure}
\caption{GERMANY: Industrial production and associated smooth-trends}
\vspace{-1cm}
\begin{center}
\begin{minipage}[c]{1.0\linewidth}
\includegraphics[angle=0,width=\linewidth]{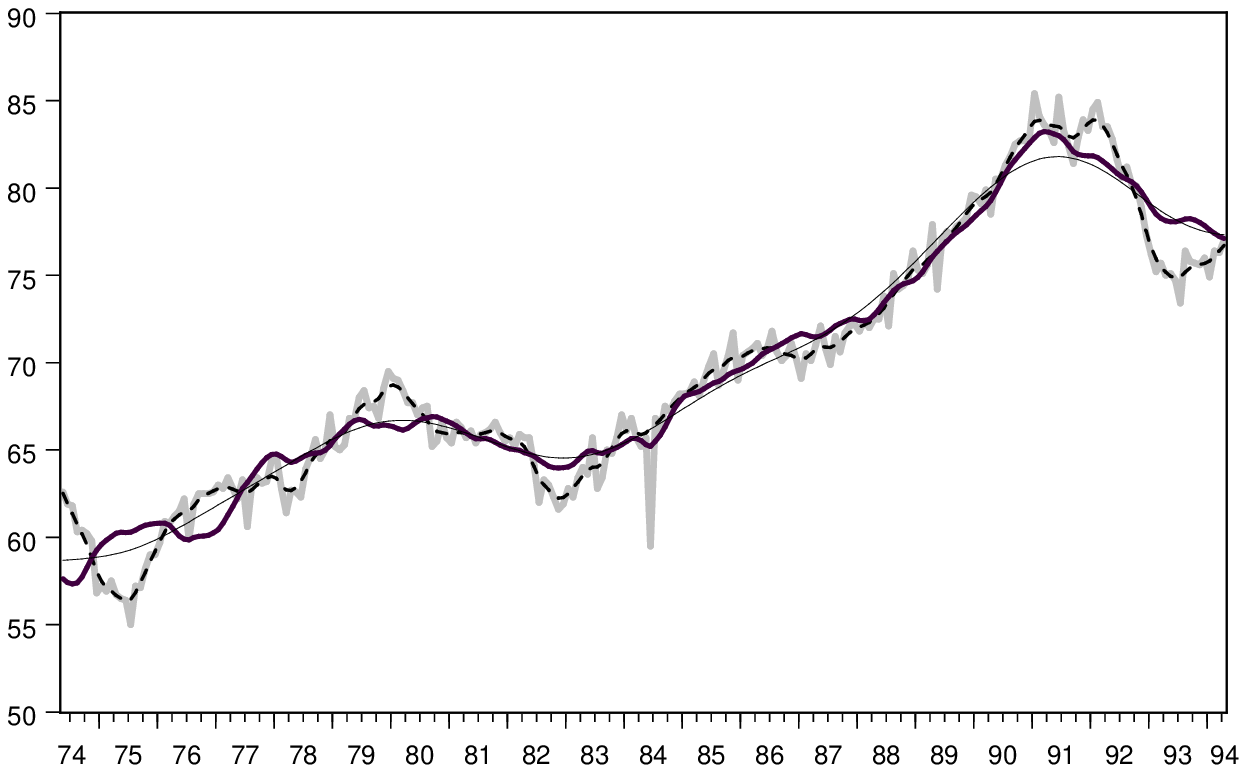} 
\end{minipage}
%\hspace{0.010cm}
%\vspace{0.010cm}
%\end{center}
%\footnotesize{The chart shows the trends extracted with the HP (solid line) and with META (dotted line)} 
%\end{figure}
\vspace{-1.2cm}

%\begin{figure}[!ht]
%\caption{Smooth-trends (HP filter vs. META)}
%\begin{center}
\begin{minipage}[c]{1.0\linewidth}
\includegraphics[angle=0,width=\linewidth]{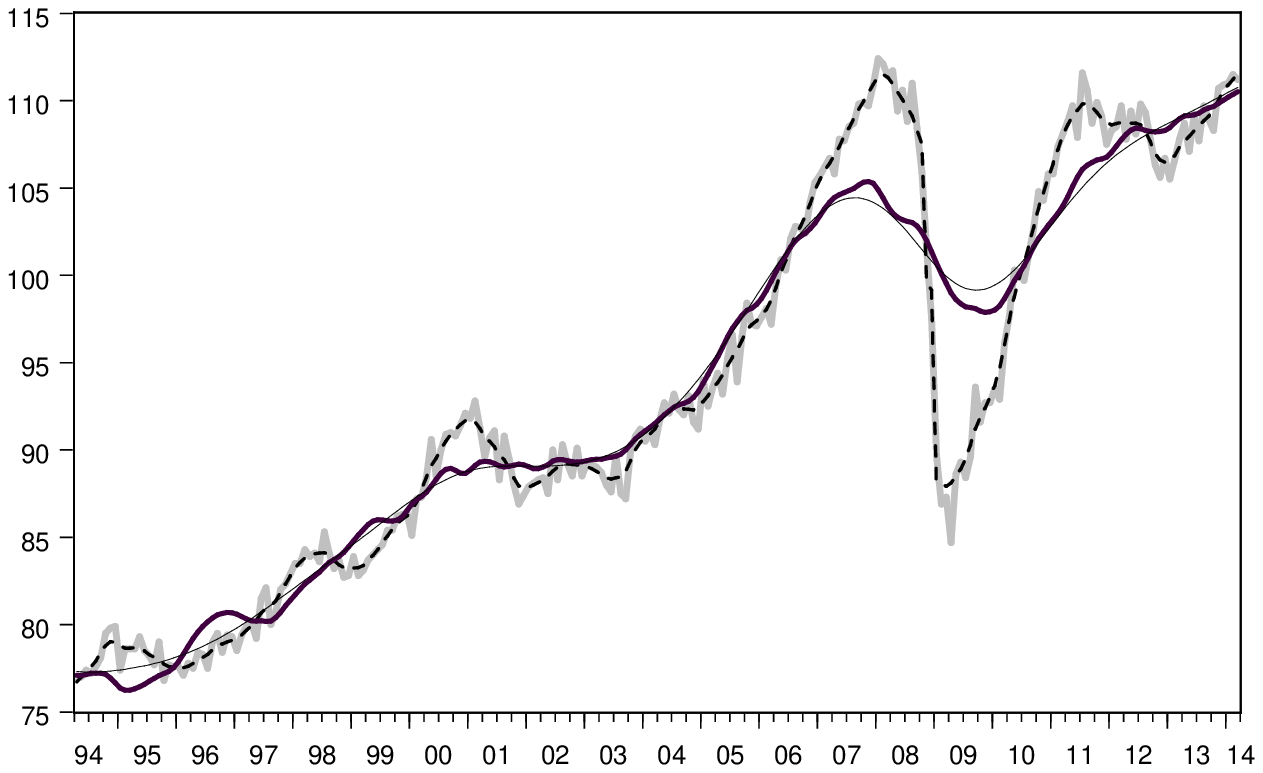} 
\footnotesize{The chart shows the industrial production series (in grey) together with the trends extracted with the standard HP filter (tiny black line), with the META approach (thick black line) and with the AMB approach (dotted line). The first chart refers to the sample 1974-1994, while the second one refers to the sample 1994-2014} 
\end{minipage}
%\hspace{0.010cm}
%\vspace{0.010cm}

\end{center}

\end{figure}
\newpage

\begin{figure}
\caption{ITALY: Industrial production and associated smooth-trends}
\vspace{-1cm}
\begin{center}
\begin{minipage}[c]{1.0\linewidth}
\includegraphics[angle=0,width=\linewidth]{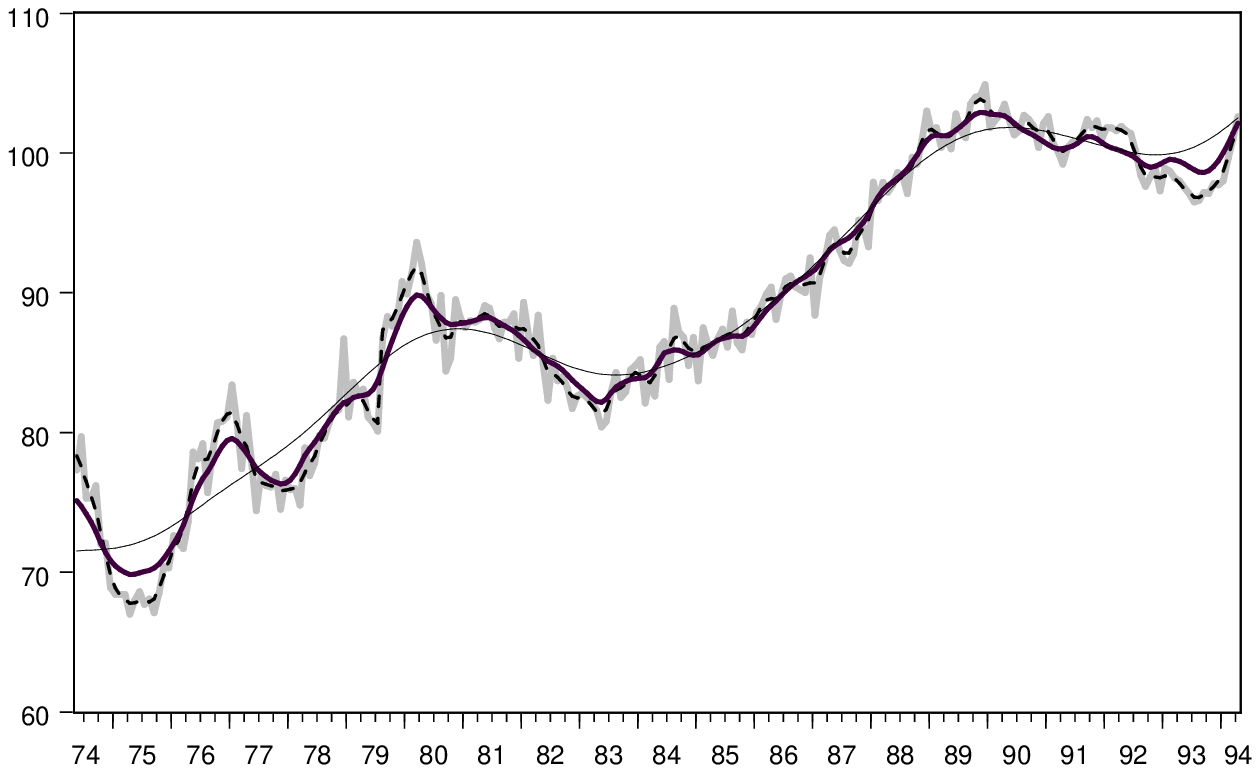} 
\end{minipage}
%\hspace{0.010cm}
%\vspace{0.010cm}
%\end{center}
%\footnotesize{The chart shows the trends extracted with the HP (solid line) and with META (dotted line)} 
%\end{figure}
\vspace{-1.2cm}

%\begin{figure}[!ht]
%\caption{Smooth-trends (HP filter vs. META)}
%\begin{center}
\begin{minipage}[c]{1.0\linewidth}
\includegraphics[angle=0,width=\linewidth]{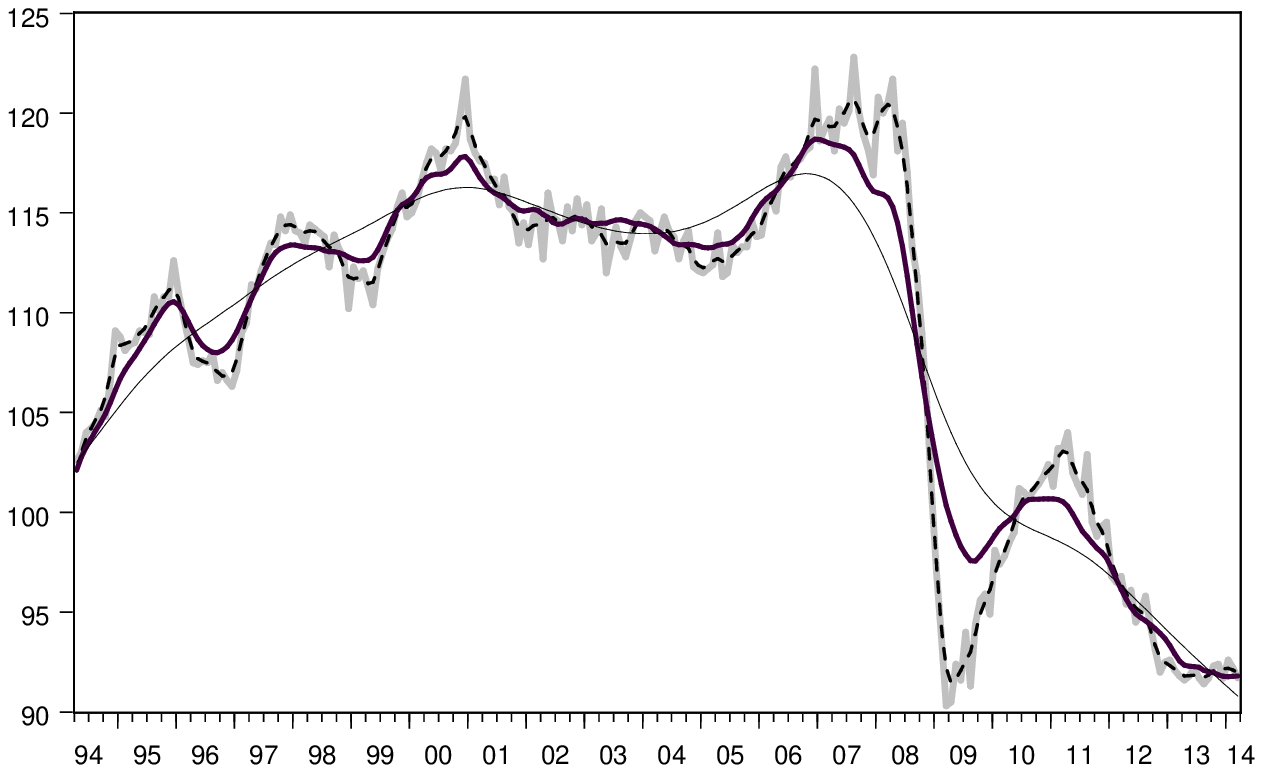} 
\footnotesize{The chart shows the industrial production series (in grey) together with the trends extracted with the standard HP filter (tiny black line), with the META approach (thick black line) and with the AMB approach (dotted line). The first chart refers to the sample 1974-1994, while the second one refers to the sample 1994-2014} 
\end{minipage}
%\hspace{0.010cm}
%\vspace{0.010cm}

\end{center}

\end{figure}
\newpage

\begin{figure}
\caption{NETHERLANDS: Industrial production and associated smooth-trends}
\vspace{-1cm}
\begin{center}
\begin{minipage}[c]{1.0\linewidth}
\includegraphics[angle=0,width=\linewidth]{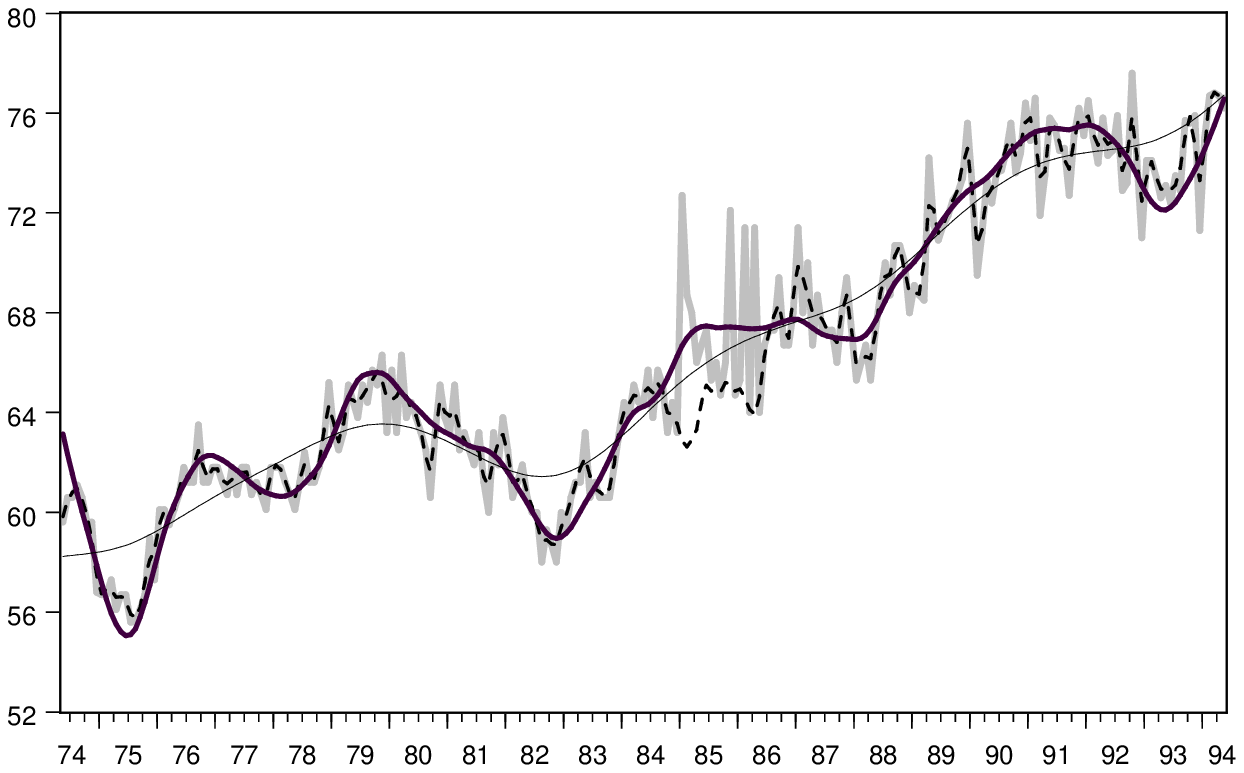} 
\end{minipage}
%\hspace{0.010cm}
%\vspace{0.010cm}
%\end{center}
%\footnotesize{The chart shows the trends extracted with the HP (solid line) and with META (dotted line)} 
%\end{figure}
\vspace{-1.2cm}

%\begin{figure}[!ht]
%\caption{Smooth-trends (HP filter vs. META)}
%\begin{center}
\begin{minipage}[c]{1.0\linewidth}
\includegraphics[angle=0,width=\linewidth]{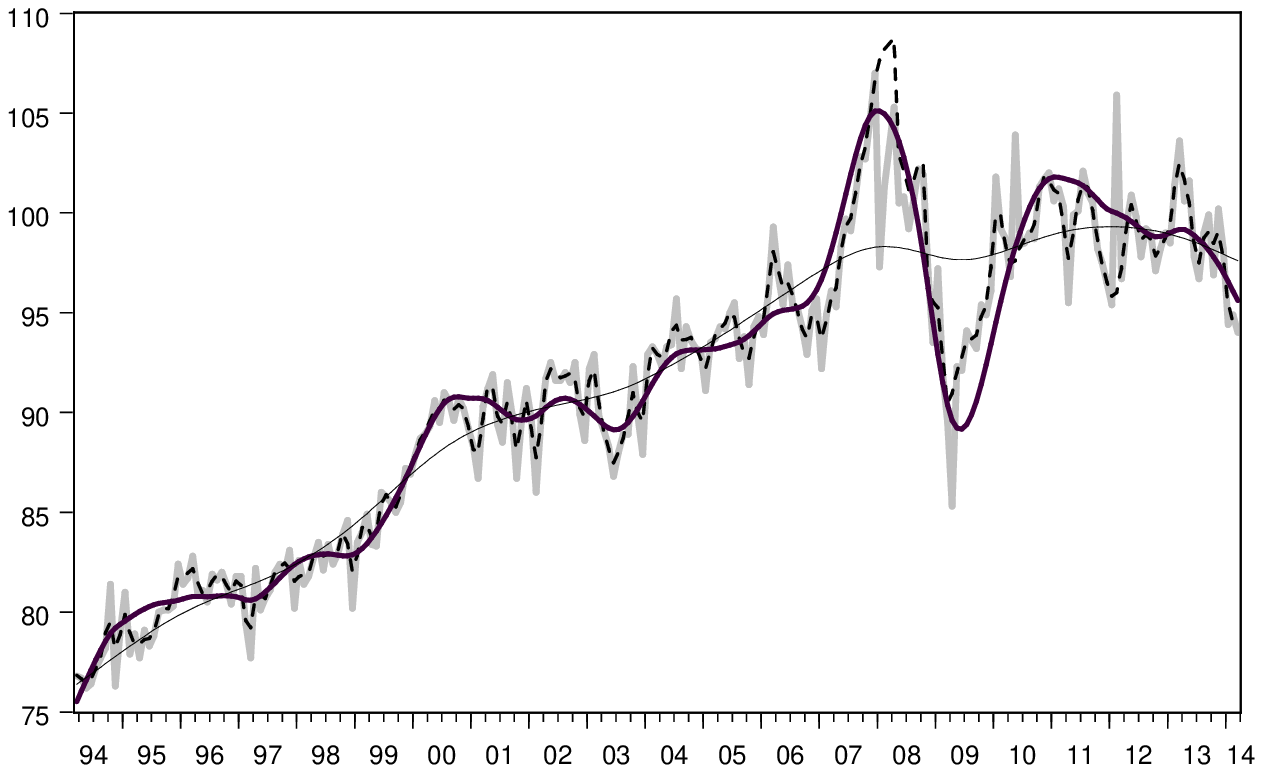} 
\footnotesize{The chart shows the industrial production series (in grey) together with the trends extracted with the standard HP filter (tiny black line), with the META approach (thick black line) and with the AMB approach (dotted line). The first chart refers to the sample 1974-1994, while the second one refers to the sample 1994-2014} 
\end{minipage}
%\hspace{0.010cm}
%\vspace{0.010cm}

\end{center}

\end{figure}
\newpage

\begin{figure}
\caption{PORTUGAL: Industrial production and associated smooth-trends}
\vspace{-1cm}
\begin{center}
\begin{minipage}[c]{1.0\linewidth}
\includegraphics[angle=0,width=\linewidth]{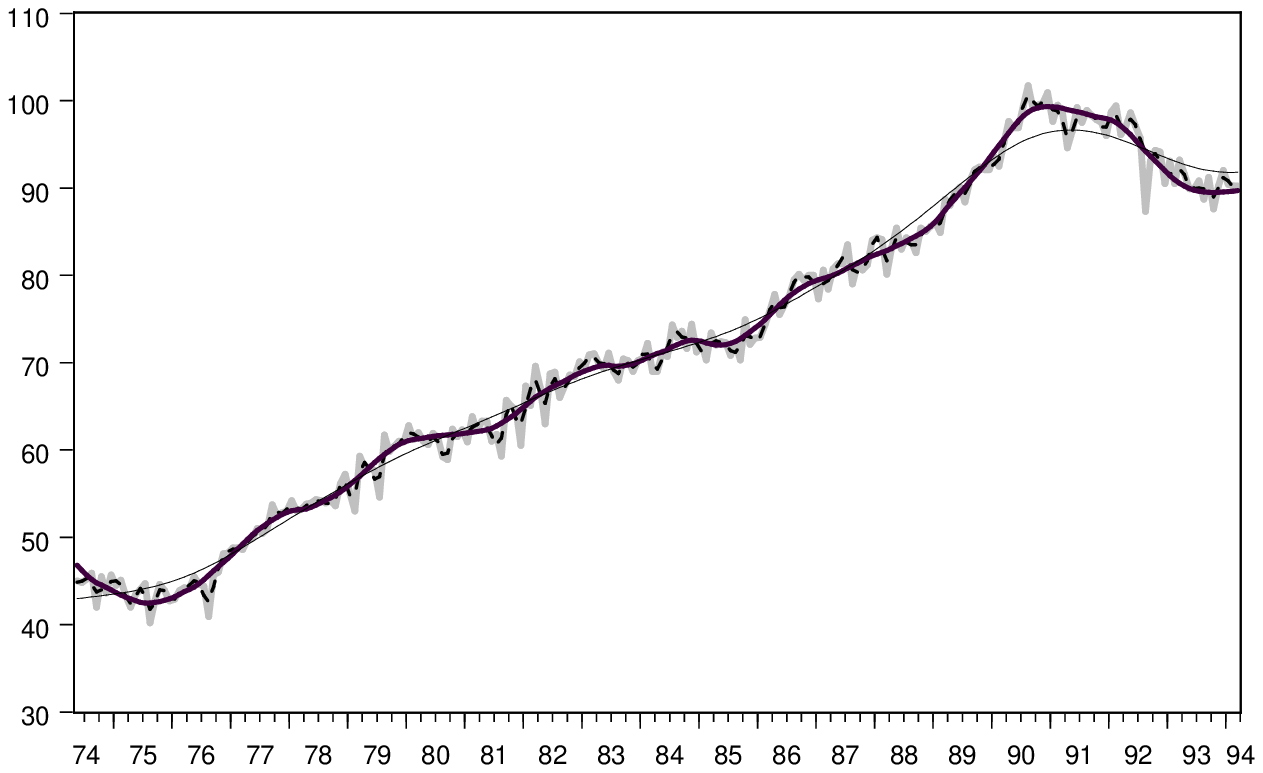} 
\end{minipage}
%\hspace{0.010cm}
%\vspace{0.010cm}
%\end{center}
%\footnotesize{The chart shows the trends extracted with the HP (solid line) and with META (dotted line)} 
%\end{figure}
\vspace{-1.2cm}

%\begin{figure}[!ht]
%\caption{Smooth-trends (HP filter vs. META)}
%\begin{center}
\begin{minipage}[c]{1.0\linewidth}
\includegraphics[angle=0,width=\linewidth]{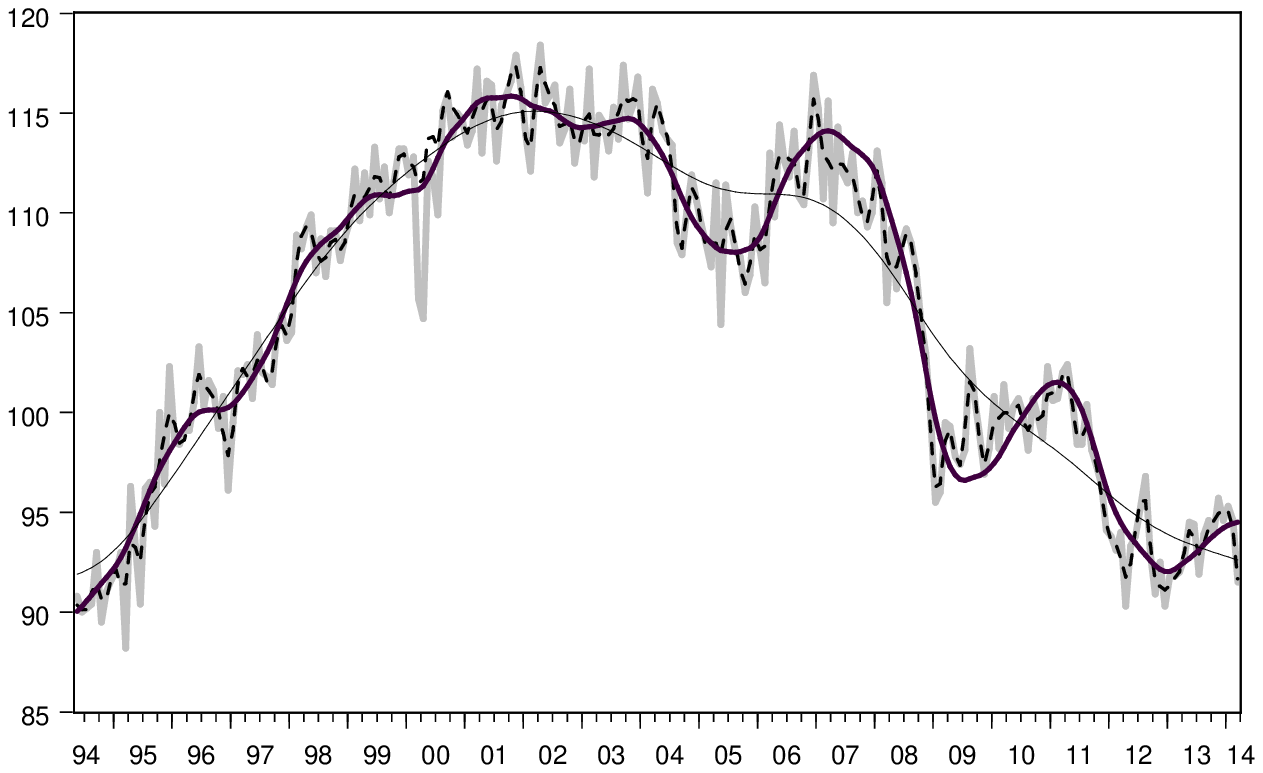} 
\footnotesize{The chart shows the industrial production series (in grey) together with the trends extracted with the standard HP filter (tiny black line), with the META approach (thick black line) and with the AMB approach (dotted line). The first chart refers to the sample 1974-1994, while the second one refers to the sample 1994-2014} 
\end{minipage}
%\hspace{0.010cm}
%\vspace{0.010cm}

\end{center}

\end{figure}
\newpage

\begin{figure}
\caption{SPAIN: Industrial production and associated smooth-trends}
\vspace{-1cm}
\begin{center}
\begin{minipage}[c]{1.0\linewidth}
\includegraphics[angle=0,width=\linewidth]{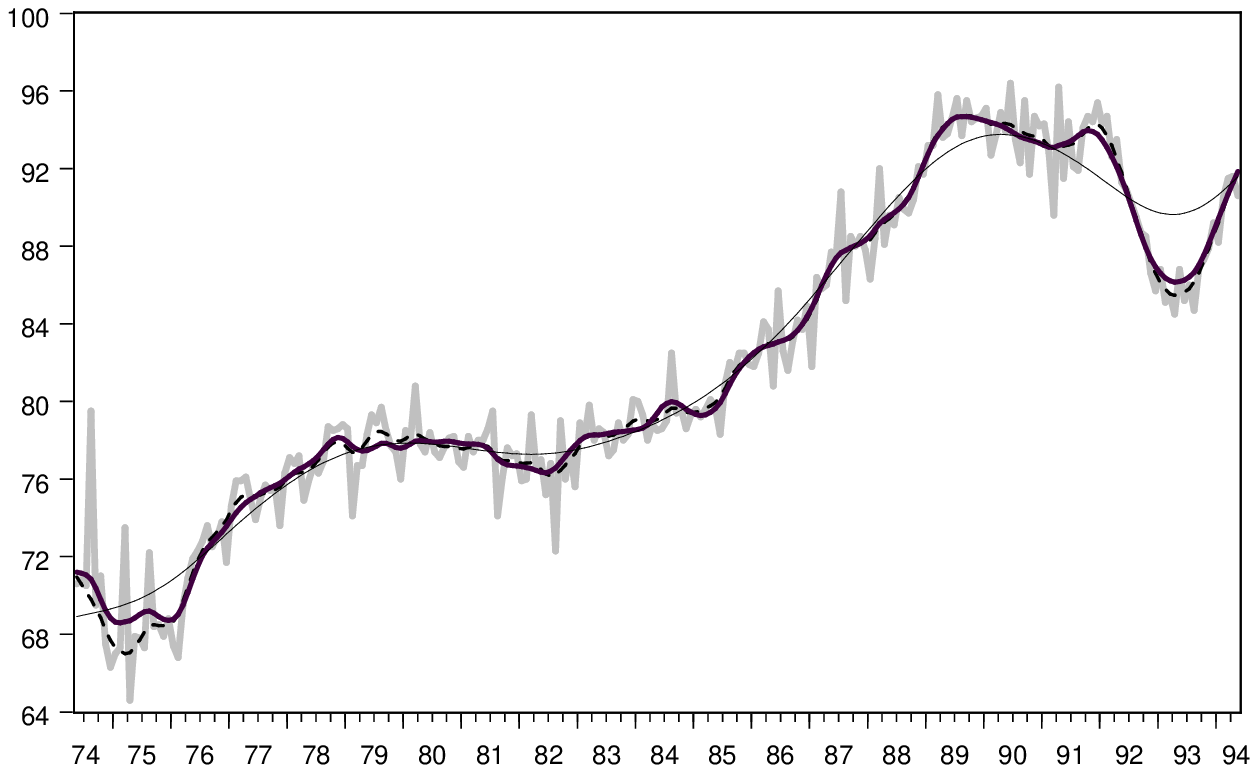} 
\end{minipage}
%\hspace{0.010cm}
%\vspace{0.010cm}
%\end{center}
%\footnotesize{The chart shows the trends extracted with the HP (solid line) and with META (dotted line)} 
%\end{figure}
\vspace{-1.2cm}

%\begin{figure}[!ht]
%\caption{Smooth-trends (HP filter vs. META)}
%\begin{center}
\begin{minipage}[c]{1.0\linewidth}
\includegraphics[angle=0,width=\linewidth]{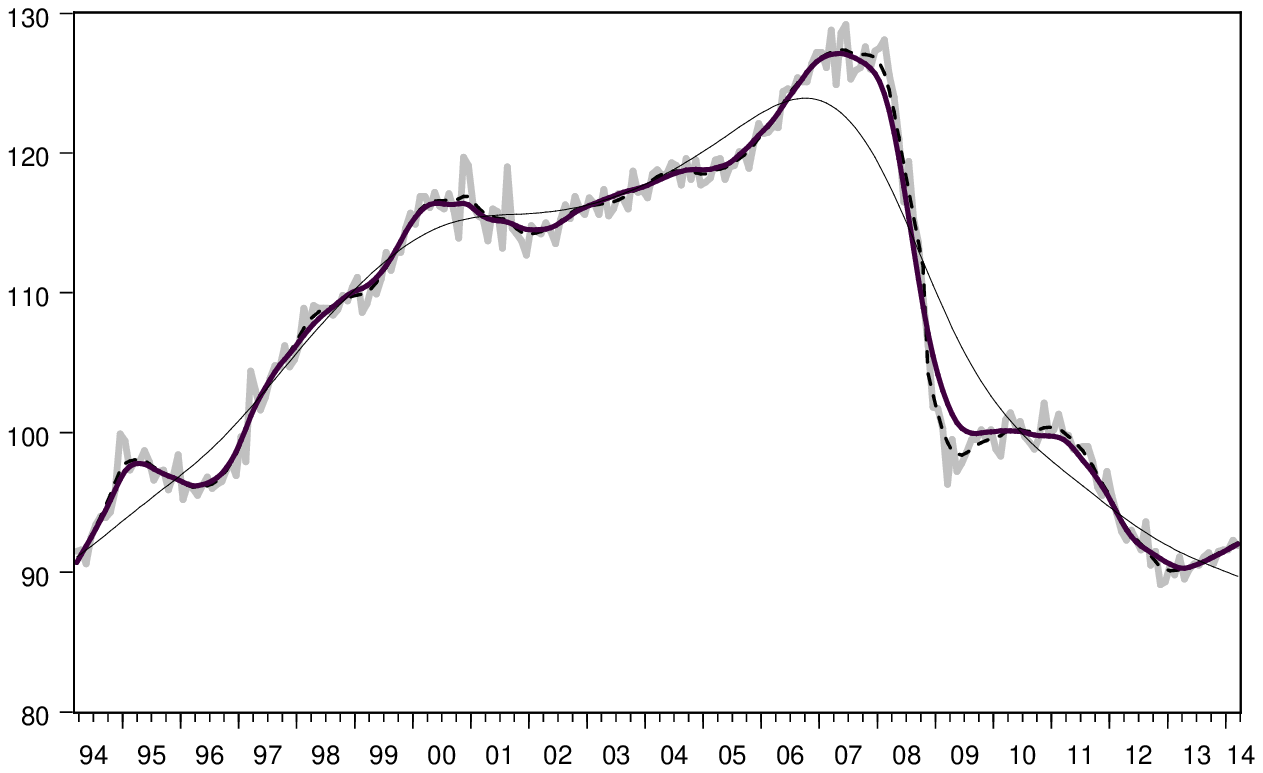} 
\footnotesize{The chart shows the industrial production series (in grey) together with the trends extracted with the standard HP filter (tiny black line), with the META approach (thick black line) and with the AMB approach (dotted line). The first chart refers to the sample 1974-1994, while the second one refers to the sample 1994-2014} 
\end{minipage}
%\hspace{0.010cm}
%\vspace{0.010cm}

\end{center}

\end{figure}
\newpage

\begin{figure}
\caption{UK: Industrial production and associated smooth-trends}
\vspace{-1cm}
\begin{center}
\begin{minipage}[c]{1.0\linewidth}
\includegraphics[angle=0,width=\linewidth]{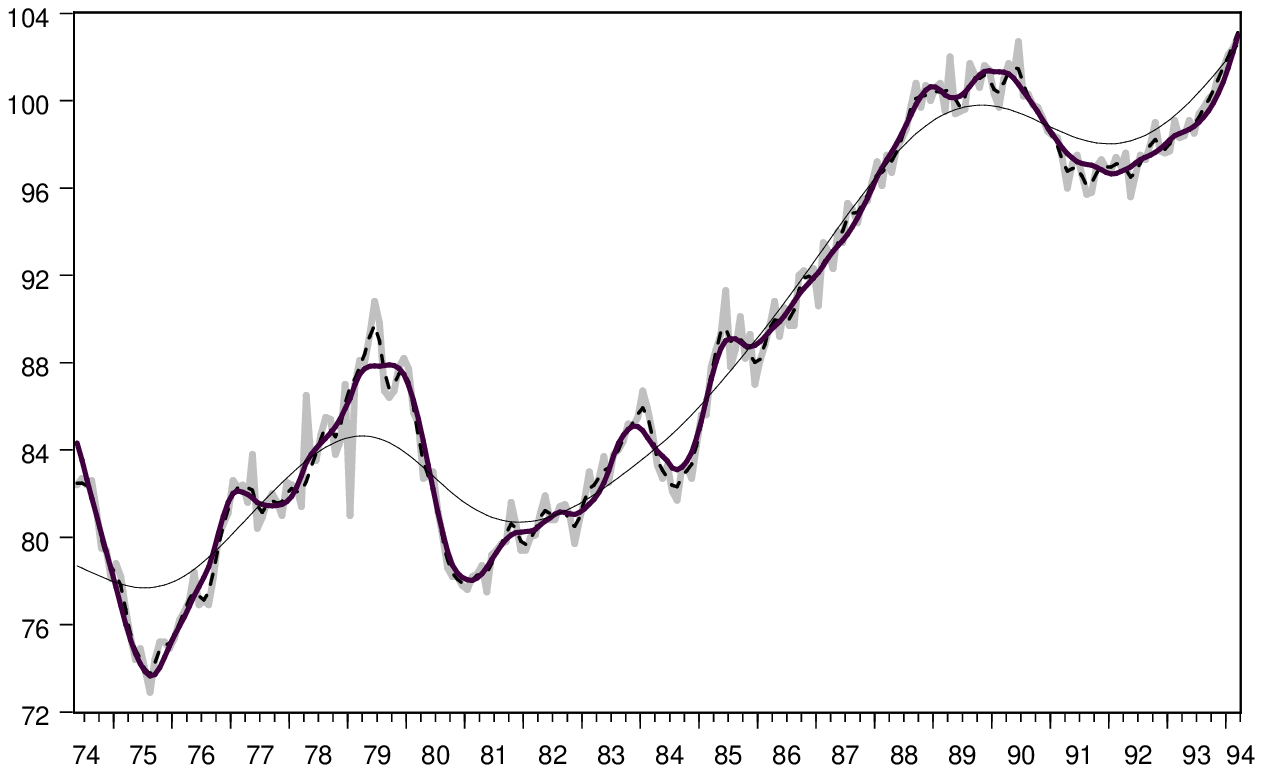} 
\end{minipage}
%\hspace{0.010cm}
%\vspace{0.010cm}
%\end{center}
%\footnotesize{The chart shows the trends extracted with the HP (solid line) and with META (dotted line)} 
%\end{figure}
\vspace{-1.2cm}

%\begin{figure}[!ht]
%\caption{Smooth-trends (HP filter vs. META)}
%\begin{center}
\begin{minipage}[c]{1.0\linewidth}
\includegraphics[angle=0,width=\linewidth]{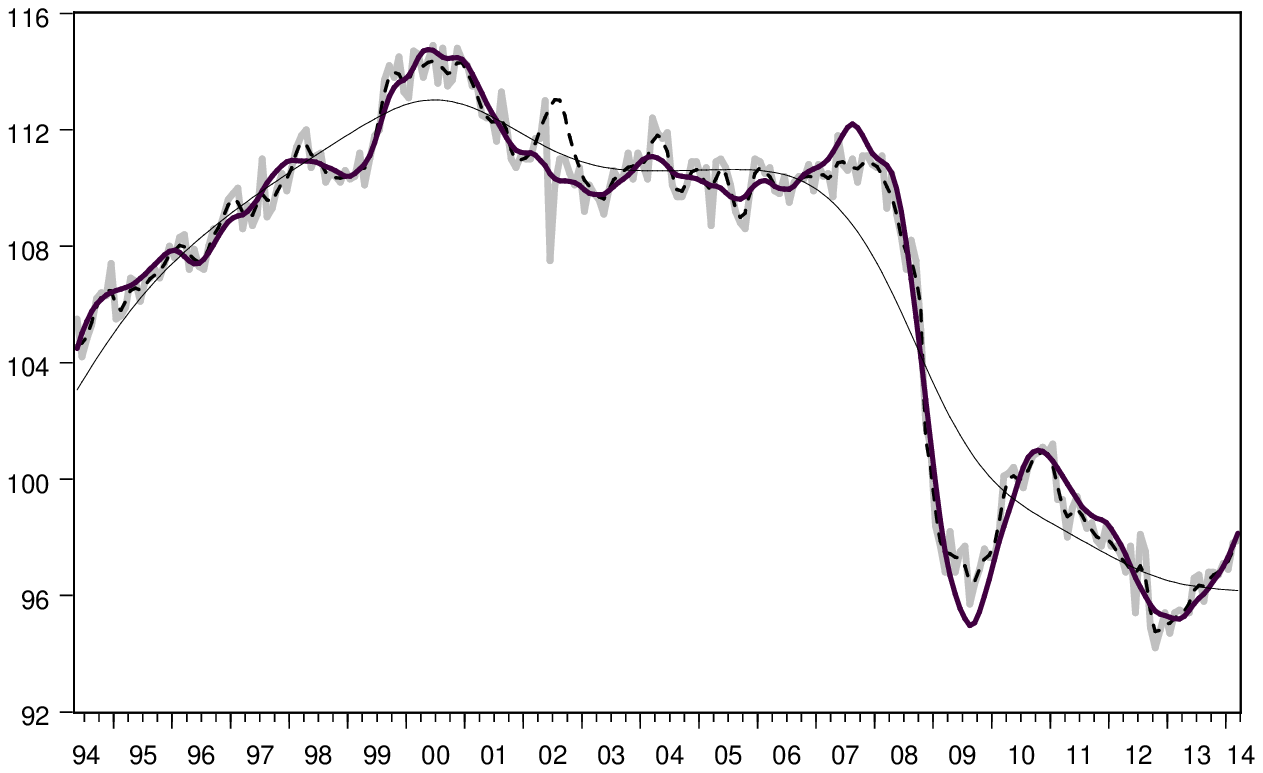} 
\footnotesize{The chart shows the industrial production series (in grey) together with the trends extracted with the standard HP filter (tiny black line), with the META approach (thick black line) and with the AMB approach (dotted line). The first chart refers to the sample 1974-1994, while the second one refers to the sample 1994-2014} 
\end{minipage}
%\hspace{0.010cm}
%\vspace{0.010cm}

\end{center}

\end{figure}
\newpage


\begin{thebibliography}{99}

\bibitem{Akaike}{Akaike H. (1980) Seasonal adjustment by a Bayesian modeling. \textit{Journal of Time Series Analysis} 1, 1, 1-13.}

\bibitem{BaxKing}{Baxter M. \& R. G. King (1999). Measuring business cycles: approximate band-pass filters for economic time series. \textit{Review of Economics \& Statistics}, 81, 4, 575-593.}

\bibitem{BevNel}{Beveridge S. \& C. R. Nelson (1981). A new approach to decomposition of economic time series into permanent and transitory components with particular attention to measurement of the business cycle. \textit{Journal of Monetary Economics}, 7, 2, 151-174.}

\bibitem{BoxJenkins}{Box, G.E.P. \& Jenkins, G.M. (1976). Time series analysis: forecasting and control. Holden-Day, San Francisco, Calif.-Düsseldorf-Johannesburg.}

\bibitem{CANOVA}{Canova F. (1998) Detrending and business cycle facts: A user's guide. \textit{Journal of Monetary Economics} 41, 3, 533-540.}

\bibitem{GLR}{Gohberg, I., Lancaster, P. \& Rodman, L. (1982). Matrix polynomials. Academic Press, Inc., New York-London.}

\bibitem{HARV}{Harvey A.C. (1989) \textit{Forecasting Structural Time Series and the Kalman Filter}. Cambridge: Cambridge University Press.}

\bibitem{HarvJ}{Harvey A.C. \& A. Jaeger (1993). Detrending, stylized facts and the business cycle. \textit{Journal of Applied Econometrics}, 3, 231-247.}

\bibitem{HarvTri}{Harvey A.C. \& T. M. Trimbur (2003). General model-based filters for extracting cycles and trends in economic time series. \textit{Review of Economics \& Statistics}, 85, 2, 244-255.}

\bibitem{HarvTri}{Harvey A.C. \& T. M. Trimbur (2008). Trend estimation and the Hodrick-Prescott filter. \textit{Journal of the Japan Statistical Society}, 85, 2, 244-255.}

\bibitem{HP}{Hodrick R.J. \& E.C. Prescott (1997). Postwar U.S. business cycle: An empirical investigation. \textit{Journal of Money, Credit \& Banking}, 29 (1), 1-16.}

\bibitem{KAMAR2}{Kaiser R. \& A. Maravall (2001) \textit{Measuring Business Cycles in Economic Statistics}. Lecture Notes in Statistics, vol. 154. NewYork: Springer.}

\bibitem{KOz}{Kozicki S. (1999). Multivariate detrending under common trend restrictions: Implications for business cycle research. \textit{Journal of Economic Dynamics \& Control}, 23, 7, 997-1028.}

\bibitem{STAMP}{Koopman S. J. , Harvey A.C. , Doornik J. A. \& N. Shephard (2007) \textit{STAMP 8.2 Structural Time Series Analyzer, Modeller and Predictor}. London: Timberlake Consultants Ltd.}

\bibitem{LESER}{Leser C. E. V. (1961) A simple method of trend construction. \textit{Journal of the Royal Statistical Society. Series B} 23, 1, 91-107.}

\bibitem{LinM10}{Ling, S.Q. \& McAleer, M. (2010). A general asymptotic theory for time-series models. Stat. Neerl. 64, no. 1, 97-111.}

\bibitem{MARIO}{Maravall A. \& A. Del-Rio (2007) Temporal Aggregation, Systematic Sampling, and the Hodrick-Prescott Filter. \textit{Computational Statistics \& Data Analysis} 52, 975-998.}

\bibitem{McElroy1}{McElroy T. (2008) Exact formulas for the Hodrick-Prescott filter. \textit{Econometrics Journal} 11, 1, 209-217.}

\bibitem{McElroyTrim}{McElroy T. \& T. Trimbur (2015) Signal extraction for non-stationary multivariate time series with illustrations for trend inflation. \textit{Journal of Time Series Analysis} 36, 2, 209-227.}

\bibitem{MILLS}{Mills T. (2009) Modelling trends and cycles in economic time series: historical prospective and future development. \textit{Cliometrica}, 3, 221-244.}

\bibitem{MNZ}{Morley J. C., Nelson C. \& E. Zivot (2003). Why are the Beveridge-Nelson and unobserved-components decomposition of GDP so different? \textit{Review of Economics \& Statistics}, LXXXV, 2, 235-243.}

\bibitem{OZC}{Oh K. H., Zivot E., Creal D. (2008). The relationship between the Beveridge-Nelson decomposition and other permanent-transitory decompositions that are popular in economics. \textit{Journal of Econometrics}, 146, 207-219.}

\bibitem{OECD}{Organisation for Economic Co-operation and Development (2014) \textit{Main economic indicators}. \url{http://stats.oecd.org/}}

%\bibitem{OhZivCre}{Oh K. H. \& E. Zivot \& D. Creal (1998) The relationship between the Beveridge–Nelson decomposition and other permanent–transitory decompositions that are popular in economics. \textit{Journal of Econometrics} 146, 2, 207-219.}


\bibitem{PolSbra}{Poloni F. \& Sbrana G. (2014) A note on forecasting demand using Multivariate Exponential Smoothing framework. arXiv.org, E-print arXiv:1411.4134, \url{http://arxiv.org/abs/1411.4134}}


\bibitem{RavnUl}{Ravn M. O. \& H. Uhlig (2002). On Adjusting the Hodrick-Prescott Filter for the Frequency of Observations. \textit{Review of Economics \& Statistics}, 84, 2, 371-376.}

\bibitem{Sbrana}{Sbrana G. (2011) Structural time series models and aggregation: some analytical results. \textit{Journal of Time Series Analysis}, 32, 3, 315-316.}

\bibitem{StockWATS}{Stock J. \& Watson M. W. (1988) Testing for common trends. \textit{Journal of the American Statistical Society}, 83, 404, 1097-1107.}

\bibitem{WATS}{Watson M. W. (1986) Univariate detrending methods with stochastic trends. \textit{Journal of Monetary Economics}, 18, 1, 49-75.}

\end{thebibliography}
\end{document}